\documentclass[11pt,a4paper]{article}
\usepackage{amsmath,amssymb,amsthm}
\usepackage[margin=1in]{geometry}
\usepackage{xcolor}
\usepackage[colorlinks,citecolor={blue!50!black},linkcolor={red!50!black},urlcolor={blue!70!black}]{hyperref}

\newcommand{\Def}{\operatorname{Def}}
\newcommand{\Ric}{\operatorname{Ric}}
\newcommand{\Div}{\operatorname{div}}
\newcommand{\Divg}{\operatorname{div}_g}
\newcommand{\Curl}{\operatorname{curl}}
\newcommand{\curlg}{\operatorname{curl}_g}
\newcommand{\gradg}{\nabla_g}
\newcommand{\tr}{\operatorname{tr}}
\newcommand{\R}{\mathbb{R}}
\newcommand{\eps}{\varepsilon}
\newcommand{\II}{\mathrm{I\!I}}
\newcommand{\LapB}{\Delta_{B}}
\newcommand{\LapH}{\Delta_{H}}
\newcommand{\LapDef}{\Delta_{\Def}}
\newcommand{\Frad}{\mathcal{F}_{\mathrm{rad}}}
\newcommand{\Hh}{\mathcal H}

\newcommand{\Psig}{P_{\sigma}}
\newcommand{\norm}[1]{\lVert #1\rVert}

\theoremstyle{plain}
\newtheorem{theorem}{Theorem}[section]
\newtheorem{proposition}[theorem]{Proposition}
\newtheorem{lemma}[theorem]{Lemma}
\newtheorem{corollary}[theorem]{Corollary}
\theoremstyle{remark}

\title{Boundary conditions select the viscous operator on Riemannian
hypersurfaces: formal analysis and rigorous thin-shell limits}
\author{Zhi-Wei Wang\thanks{College of Physics, Jilin University, Changchun
130012, China; \texttt{zhiweiwang.phy@gmail.com}.}
\and
Samuel L.\ Braunstein\thanks{Department of Computer Science, University of York,
York YO10 5GH, United Kingdom; \texttt{sam.braunstein@york.ac.uk}.}}
\date{\today}

\begin{document}
\maketitle

\begin{abstract}
A viscous fluid confined to a thin layer around a curved surface is governed, as
the layer thickness vanishes, by an effective viscous operator on the surface. We
show that the wall conditions select this operator. The vorticity-free (free) and
stress-free (Navier) conditions, which coincide on flat walls but differ on
curved ones by the shape operator, yield respectively the Hodge Laplacian and the
deformation Laplacian, and these differ universally, on any hypersurface, by
twice the Ricci curvature; a one-parameter family of wall conditions joins them,
with an effective operator that couples to the extrinsic geometry only in
between. We prove this in two forms: formally, by matched asymptotics, on an
arbitrary hypersurface, and rigorously, as Mosco convergence of the viscous
energy forms, hence with resolvent, semigroup and spectral convergence, on
surfaces of revolution. The stress-free limit on general surfaces is due to Miura
and is recovered here; the rigorous vorticity-free limit beyond the sphere, via a
uniform Gaffney inequality, together with the interpolating family and the spectral
packaging, is new, and the analysis makes precise a conflation of the two
conditions in the classical spherical treatment. The extension-dependence of the
operator found on the ellipsoid is explained as a dependence on the wall
condition.
\end{abstract}

\section{Introduction}

A viscous incompressible fluid confined to a thin layer around a curved surface
is governed, as the layer thickness tends to zero, by an effective viscous
operator on the surface. This thin-film reduction underlies the modelling of
geophysical flows in the atmosphere and ocean, of hydrodynamics on biological
membranes, and of flows along fluid interfaces. The identity of the effective
operator has, however, proved subtle. Unlike the scalar Laplace-Beltrami
operator, the viscous operator acting on a tangential vector field on a curved
surface is not fixed by the intrinsic geometry alone, and different derivations
have produced different operators; which operator is correct, and what
determines it, is the subject of a recent literature surveyed by
Czubak~\cite{Czubak2024}.

The ambiguity has a clean origin in the boundary conditions. A fluid in a
three-dimensional layer is bounded by two walls, at which the physically natural
tangential conditions are of two kinds. The first is the vorticity-free, or
free, condition,
\begin{equation}\label{eq:free}
u\cdot n = 0,\qquad (\operatorname{curl}u)\times n = 0,
\end{equation}
and the second is the stress-free, or Navier, condition,
\begin{equation}\label{eq:navier}
u\cdot n = 0,\qquad \big(\Def(u)\,n\big)_{\mathrm{tan}} = 0,
\end{equation}
where $\Def(u)=\tfrac12(\nabla u+\nabla u^{\mathsf T})$ is the rate-of-deformation
tensor and $n$ is the wall normal. On a flat wall the conditions \eqref{eq:free}
and \eqref{eq:navier} coincide. On a curved wall they differ by the shape
operator (Weingarten map) of the wall, the difference being of the order of the
wall's curvature. This distinction is invisible in flat geometry, and it is the
crux of the present paper.

The thin-film limit has been approached from both sides of this distinction,
but the distinction itself has not been isolated. In flat layers
$\omega\times(0,\eps)$, Temam and Ziane~\cite{TZ96} treated a range of boundary
conditions, including the free condition \eqref{eq:free}, on whose flat walls
free and stress-free agree. For the sphere, Temam and Ziane~\cite{TZ97} derived
the surface Navier-Stokes equations as the thin-film limit of a spherical shell,
in justification of the primitive equations of geophysics; they imposed the free
condition \eqref{eq:free} and identified it with the stress-free condition
\eqref{eq:navier}, and their limit operator, recorded as the Laplace-Beltrami
operator, is in fact the Hodge-de Rham Laplacian. On general surfaces,
Miura~\cite{MiuraIII} gave the first rigorous derivation of the surface
Navier-Stokes equations, under the stress-free condition \eqref{eq:navier},
obtaining in the clean case the viscous operator
$2\Div_g\Def_g=\LapB+\Ric$, equivalently $\LapH+2\Ric$ on divergence-free
fields; he observed in a remark~\cite[Rem.~2.10]{MiuraIII} that this differs from
the spherical Hodge limit of~\cite{TZ97} by twice the Weingarten map, and hence,
on the sphere, by $2\Ric$. Most recently, Chan, Czubak and Fuster
Aguilera~\cite{CCF25}, extending the ellipsoidal restriction analysis of Chan,
Czubak and Yoneda~\cite{CCY23}, carried out a thin-shell asymptotic derivation of
the viscous operator on the ellipsoid and found that the operator obtained
depends on the method, that is, on the way the surface flow is extended into the
shell. The coercivity underlying these limits, a uniform-in-thickness Korn
inequality on the shell, was established for general closed hypersurfaces by
Lewicka and Müller~\cite{LM11}, and in the Navier setting by Miura~\cite{MiuraI},
in both cases with the inequality degenerating on the Killing fields of the
surface.

The common thread is that the effective operator depends on a choice, of boundary
condition or of extension, made where the two walls of the shell live, but this
dependence has been seen only case by case: the free condition only on the
sphere and there conflated with the stress-free one, the stress-free condition on
general surfaces, the ellipsoidal method-dependence as a computation on a single
surface. What has been missing is the principle behind these observations and its
status as a theorem. The present paper supplies it. We show that on an arbitrary
hypersurface the wall condition selects the effective viscous operator; that the
vorticity-free and stress-free conditions produce, respectively, the Hodge
Laplacian $\LapH$ and the deformation Laplacian $\LapDef=\LapB+\Ric$; and that
the two differ universally by $2K$, twice the Ricci curvature, the two endpoints
being joined by a one-parameter family of wall conditions whose effective
operators interpolate and couple to the extrinsic geometry only in between. We
establish this in two forms: rigorously, as convergence of the viscous energy
forms in the sense of Mosco, and hence with convergence of resolvents,
semigroups and spectra, on surfaces of revolution; and formally, by matched
asymptotic expansions, on general hypersurfaces. The extension-dependence
of~\cite{CCY23,CCF25} is thereby explained, as a boundary-condition dependence
whose magnitude is the same shape-operator term that separates the two
endpoints.

Our results are complementary to Miura's. The stress-free limit on a general
surface is his, and we recover it as one endpoint of the family; what is new is
the rigorous vorticity-free (Hodge) limit beyond the sphere, obtained through a
uniform
Gaffney inequality for the Hodge energy that we establish; the selection
principle stated as a theorem with an interpolating family, rather than a
comparison of two fixed cases; and the formulation as Mosco convergence, which
delivers the spectral consequences that solution-level convergence does not. The
uniform Korn inequality that the deformation energy requires we take
from~\cite{LM11,MiuraI}, proving it directly only in the revolution case treated
here. Where our analysis meets the classical spherical treatment it makes precise
what was there implicit: on a curved surface the free and stress-free conditions
are genuinely distinct, and the Laplace-Beltrami operator of the spherical limit
is specifically the Hodge Laplacian.

Our principal results are the operator dichotomy $L^{\mathrm{slip}}=L^{\mathrm{H}}
+2K$, established formally on a general hypersurface (Theorem~\ref{thm:formal}) and
rigorously on surfaces of revolution (Theorem~\ref{thm:dichotomy}); the Mosco
convergence of the viscous forms on surfaces of revolution, for both boundary
conditions, with its resolvent, semigroup and spectral corollaries
(Theorem~\ref{thm:mosco}); and the general formal decomposition and the
interpolating family on arbitrary hypersurfaces (Section~\ref{sec:formal}), with
the deformation Laplacian arising from the deformation-tensor structure of the
viscous stress in the sense of Ebin and Marsden~\cite{EM70}. The rigorous results
hold for torus-type surfaces of revolution, a curved, non-umbilic and
non-constant-curvature family that includes the spheroid, and the formal results
require no symmetry.

The paper is organised as follows. Section~\ref{sec:setup} fixes the geometry and
defines the two energy forms. Section~\ref{sec:formal} gives the formal
decomposition on a general hypersurface, the two-wall determination of the
effective operator, the interpolating family and the dichotomy, and the
solenoidal recovery expansion. Section~\ref{sec:rigorous} makes these rigorous on
surfaces of revolution: the uniform Korn and Gaffney inequalities, the exact
identification of the reduced forms, and Mosco convergence with its spectral
consequences. Section~\ref{sec:discussion} discusses the universality mechanism,
the relation to previous work, the interpolating family, and the outlook.

\section{Geometric preliminaries and the two energy forms}
\label{sec:setup}

\subsection{The surface and the shell}

Let $M=M^{n}$ be a smooth, closed, orientable hypersurface embedded in
$\R^{n+1}$, with unit normal field $N$ and induced metric $g$. The shape
operator $S\colon TM\to TM$, defined by $S(X)=-\bar\nabla_X N$ with $\bar\nabla$
the flat ambient connection, is $g$-self-adjoint; its eigenvalues are the
principal curvatures, its normalised trace $H=\tfrac1n\tr S$ is the mean
curvature, and $\II(X,Y)=g(SX,Y)$ is the second fundamental form. The Gauss
equation for a hypersurface in flat space expresses the intrinsic Ricci tensor
through the shape operator,
\begin{equation}\label{eq:gauss}
\Ric_{ij}=nH\,S_{ij}-(S^{2})_{ij}.
\end{equation}
In dimension $n=2$, on which the rigorous results below are set, the Ricci
tensor is $\Ric=Kg$ with $K$ the Gaussian curvature, so the Ricci term acts on a
tangent vector as multiplication by the scalar $K$; we use $\Ric$ in the general
statements and $K$ in the surface-specific ones.

The shell of thickness $\eps$ about $M$ is
\[
\Sigma_\eps=\bigl\{\,p+rN(p):p\in M,\ |r|<\tfrac{\eps}{2}\,\bigr\},
\]
described by Fermi (normal) coordinates $(x,r)$, $x\in M$, $|r|<\eps/2$. The
parallel hypersurface $M_r=\{r=\text{const}\}$ has, in the flat ambient space,
shape operator and induced metric
\begin{equation}\label{eq:riccati}
S(r)=S\,(\mathrm{Id}-rS)^{-1},\qquad
g_r(X,Y)=g\bigl((\mathrm{Id}-rS)X,\,(\mathrm{Id}-rS)Y\bigr),
\end{equation}
the first being the Riccati equation $\partial_r S=S^{2}$ integrated from $M$.
The ambient metric on the shell is therefore $\bar g=g_r\oplus dr^{2}$, and the
volume element is $\det(\mathrm{Id}-rS)\,dV_g\,dr$.

We rescale the normal variable onto a fixed cylinder. Setting $r=\eps z$ with
$z\in(-\tfrac12,\tfrac12)$, the shell becomes $\Sigma=M\times(-\tfrac12,\tfrac12)$,
carrying the metric and rescaled volume
\begin{equation}\label{eq:rescaled}
G_\eps=g_{\eps z}\oplus\eps^{2}\,dz^{2},\qquad
d\mu_\eps=\det(\mathrm{Id}-\eps z\,S)\,dV_g\,dz,
\end{equation}
where $d\mu_\eps$ is the volume per unit rescaled thickness; normal
differentiation carries a factor of the thickness, $\partial_r=\eps^{-1}
\partial_z$. A vector field on $\Sigma$ is written $U=U^{t}+U^{n}N$ in its
tangential and normal parts.

\subsection{The two energy forms}

The viscous dynamics on the shell is generated, for each wall condition, by a
quadratic energy form. With the viscosity normalised to $1$ and the energy taken
per unit rescaled thickness, the two forms are the deformation (stress-based)
energy and the Hodge energy,
\begin{align}
Q^{\mathrm{slip}}_\eps(U)&=\int_\Sigma 2\,\bigl|\Def_{G_\eps}U\bigr|^{2}\,d\mu_\eps,
\label{eq:Qslip}\\[2pt]
Q^{\mathrm H}_\eps(U)&=\int_\Sigma\Bigl(\bigl|\Curl U\bigr|^{2}
+\bigl(\Div U\bigr)^{2}\Bigr)\,d\mu_\eps,
\label{eq:Qhodge}
\end{align}
where $\Def_{G_\eps}U=\tfrac12\bigl(\bar\nabla U+\bar\nabla U^{\mathsf T}\bigr)$
is the rate-of-deformation tensor and $\Curl$, $\Div$ are taken in the shell
metric. Physically, $Q^{\mathrm{slip}}_\eps$ is twice the squared rate of
deformation, the Newtonian viscous dissipation; $Q^{\mathrm H}_\eps$ is the
energy whose Euler-Lagrange operator is the Hodge Laplacian.

The two forms carry the two wall conditions of the introduction through their
domains. Impermeability, $U^{n}=0$ on the walls $z=\pm\tfrac12$, is imposed in
both cases. The deformation form is defined on
\[
V_\eps=\bigl\{\,U\in H^{1}(\Sigma)^{n+1}:\ U^{n}=0\ \text{on}\ z=\pm\tfrac12\,\bigr\},
\]
its natural (variational) boundary condition being the stress-free (Navier)
condition $(\Def U\,n)_{\mathrm{tan}}=0$; the Hodge form is defined on
\[
V^{\mathrm H}_\eps=\bigl\{\,U\in H(\Curl)\cap H(\Div):\ U^{n}=0\ \text{on}\
z=\pm\tfrac12\,\bigr\},
\]
its natural boundary condition being the vorticity-free condition
$(\Curl U)\times n=0$. On the smooth two-wall boundary $V^{\mathrm H}_\eps$
coincides, with equivalent norms, with the $U^{n}$-constrained $H^{1}$ space
\cite{GiraultRaviart}. The viscous (Stokes-type) operator associated with each
form is obtained by restricting the form to its solenoidal subspace
$\{\Div_{G_\eps}U=0\}$.

\subsection{The candidate limit operators}

As $\eps\to0$ the two shell forms are expected to converge to surface energies
on tangential, divergence-free fields $v$ on $M$, namely the deformation and
Hodge energies
\begin{equation}\label{eq:limitforms}
Q^{\mathrm{slip}}_0(v)=2\int_M\bigl|\Def_g v\bigr|^{2}\,dV_g,
\qquad
Q^{\mathrm H}_0(v)=\int_M\bigl(|dv^{\flat}|^{2}+|\delta v^{\flat}|^{2}\bigr)\,dV_g,
\end{equation}
whose Euler-Lagrange operators on solenoidal fields are the deformation
Laplacian and the Hodge Laplacian,
\begin{equation}\label{eq:limitops}
L^{\mathrm{slip}}=\Psig\bigl(\LapH+2\Ric\bigr),\qquad
L^{\mathrm H}=\Psig\,\LapH,\qquad
\LapH=-(d\delta+\delta d),
\end{equation}
with $\Psig$ the Leray projection onto solenoidal fields. The two limit
operators thus differ by the Ricci term $2\Ric$, which on a surface is $2K$; this
is the operator dichotomy. That the two forms do converge to these limits, so
that the wall condition selects the operator through the sign with which the
shape operator enters, is the content of the paper: it is established formally on
a general hypersurface in Section~\ref{sec:formal}, and rigorously, as Mosco
convergence of \eqref{eq:Qslip}--\eqref{eq:Qhodge} to \eqref{eq:limitforms}, on
surfaces of revolution in Section~\ref{sec:rigorous}.

\section{The formal decomposition on a general hypersurface}\label{sec:formal}

On a general hypersurface we work formally, by matched asymptotic expansions in
the thickness, and establish the selection principle and the interpolating
family as statements about the formal thin-shell limit. The rigorous
counterpart, for surfaces of revolution, is Section~\ref{sec:rigorous}; the expansion
constructed here is exactly the recovery sequence used there.

\subsection{Decomposition of the ambient Bochner Laplacian}\label{sub:decomp}

Let $U$ be a vector field on $\R^{n+1}$ that is tangential to $M$ along $r=0$, so
$U^{r}|_{r=0}=0$ and hence $\partial_j U^{r}|_{r=0}=0$. Using the Fermi
Christoffel symbols and $\partial_r S^{i}{}_{j}=(S^{2})^{i}{}_{j}$, the radial and
tangential traces of the ambient tensor Hessian at $r=0$ are
\begin{equation}\label{eq:radial}
\bar\nabla_r\bar\nabla_r U^{i}=\partial_r^{2}U^{i}-2S^{i}{}_{j}\,\partial_r U^{j},
\end{equation}
\begin{equation}\label{eq:tangential}
g^{jk}\bar\nabla_j\bar\nabla_k U^{i}=\LapB^{(n)}U^{i}
-(S^{2})^{i}{}_{l}U^{l}+nH\,S^{i}{}_{l}U^{l}-nH\,\partial_r U^{i},
\end{equation}
the $(S^{2})$ terms of \eqref{eq:radial} cancelling and the normal derivative in
\eqref{eq:tangential} entering through the normal part of the Gauss formula
$\bar\nabla_{\partial_j}\partial_k=\Gamma^{l}_{jk}\partial_l+\II_{jk}\partial_r$.
Summing, and using the Gauss equation $\Ric^{(n)i}{}_{l}=nH\,S^{i}{}_{l}
-(S^{2})^{i}{}_{l}$ to collapse the zero-order bracket, gives the decomposition
\begin{equation}\label{eq:decomp}
\bigl(\LapB^{(n+1)}U\bigr)^{i}\Big|_{r=0}
=\underbrace{\LapB^{(n)}U^{i}+\Ric^{(n)i}{}_{l}U^{l}}_{\LapDef^{(n)}U^{i}}
\;+\;\underbrace{\partial_r^{2}U^{i}-\bigl(nH\,\delta^{i}{}_{j}
+2S^{i}{}_{j}\bigr)\partial_r U^{j}}_{\Frad^{i}} .
\end{equation}
The intrinsic piece is the deformation Laplacian $\LapDef^{(n)}=\LapB^{(n)}
+\Ric^{(n)}$ on every hypersurface, regardless of extrinsic geometry; all
extrinsic dependence is confined to the radial boundary-shear term $\Frad$, which
involves only the normal derivatives of $U$ and the shape operator. The collapse
$nH\,S-S^{2}=\Ric^{(n)}$ is a consequence of the flatness of the ambient space
and requires no curvature assumption.

\subsection{The two-wall determination of the normal profile}\label{sub:profile}

The decomposition \eqref{eq:decomp} reduces the limit to the profile data
$\partial_r U|_{r=0}$ and $\partial_r^{2}U|_{r=0}$. The second derivative is
genuinely two-wall information: a relation imposed on the single surface $r=0$
leaves $\partial_r^{2}U|_{r=0}$ free. We therefore impose the wall condition on
both walls $\Gamma_\pm=\{r=\pm\eps/2\}$ of the shell, together with
impermeability $U^{r}=0$, which makes $\partial_i U^{r}|_{\Gamma_\pm}=0$ and the
wall reductions below exact.

\begin{lemma}[Wall reduction]\label{lem:wall}
Let $U$ satisfy $U^{r}=0$ on the wall $\Gamma_s=\{r=s\}$. Then at $r=s$:
\emph{(a)} the stress-free (Navier) condition $(\Def U\,\nu)_{\mathrm{tan}}=0$ is
equivalent to $\partial_r U^{i}=0$, the second-fundamental-form contributions
cancelling identically in $r$; \emph{(b)} the vorticity-free (Hodge) condition
$\iota_\nu\,dU^{\flat}=0$ is equivalent to $\partial_r U^{i}=2\,S^{i}{}_{j}(s)\,
U^{j}$, with the shape operator evaluated at the wall; and \emph{(c)} the
one-parameter interpolating condition indexed by $\alpha\in[0,1]$ is equivalent
to
\begin{equation}\label{eq:alpha-wall}
\partial_r U^{i}=2\alpha\,S^{i}{}_{j}(s)\,U^{j}.
\end{equation}
\end{lemma}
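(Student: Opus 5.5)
The plan is to reduce every wall condition to a statement about Fermi-coordinate components and covariant derivatives on $\Gamma_s$, and then simply compute. I will work in the coordinates $(x^i,r)$ with ambient metric $\bar g=g_r\oplus dr^2$ and wall normal $\nu=\partial_r$. From \eqref{eq:riccati} I will first record the only Christoffel symbols needed:
\begin{equation*}
\partial_r (g_r)_{ij}=-2\,\II(r)_{ij},\qquad
\bar\Gamma^{k}{}_{ir}=-S^{k}{}_{i}(r),\qquad
\bar\Gamma^{r}{}_{rr}=\bar\Gamma^{r}{}_{ir}=0,
\end{equation*}
where $\II(r)_{ij}=(g_r)_{ik}S^{k}{}_{j}(r)$. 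I will check the sign convention against $S(X)=-\bar\nabla_XN$ and the Riccati equation $\partial_rS=S^2$ before going further.

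Next I write $U^\flat=U_i\,dx^i+U_r\,dr$ with $U_r=U^r$. The hypothesis $U^r=0$ on $\Gamma_s$ gives $\partial_iU_r=0$ there, since $\partial_i$ is tangent to the wall. Normal derivatives of $U^r$ never enter the tangential components I need, so this is the only way impermeability is used. The key bookkeeping point is that lowering an index does not commute with $\partial_r$:
\begin{equation*}
\partial_rU_i=(g_s)_{ij}\partial_rU^j-2\,\II(s)_{ij}U^j .
\end{equation*}
With this identity, on $\Gamma_s$ I obtain
\begin{equation*}
\bar\nabla_iU_r=\partial_iU_r-\bar\Gamma^k{}_{ir}U_k=\II(s)_{ij}U^j,\qquad
\bar\nabla_rU_i=\partial_rU_i-\bar\Gamma^k{}_{ri}U_k=(g_s)_{ij}\partial_rU^j-\II(s)_{ij}U^j .
\end{equation*}

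Part (a) follows because $2(\Def U\,\nu)_{\mathrm{tan},i}=\bar\nabla_iU_r+\bar\nabla_rU_i=(g_s)_{ij}\partial_rU^j$. The two $\II$ terms cancel for every $s$, and nondegeneracy of $g_s$ gives the equivalence with $\partial_rU^i=0$.

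Part (b) uses the torsion-free identity $dU^\flat(\nu,\partial_i)=\partial_rU_i-\partial_iU_r$, which requires no Christoffel symbols at all. On $\Gamma_s$ this equals $(g_s)_{ij}\partial_rU^j-2\II(s)_{ij}U^j$. Raising the index with $g_s^{-1}$ gives $\partial_rU^i=2S^i{}_j(s)U^j$.

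For (c), the paper has not yet written the interpolating condition explicitly, so I will define it as the convex combination of the two boundary one-forms:
\begin{equation*}
(1-\alpha)\,2(\Def U\,\nu)_{\mathrm{tan}}+\alpha\,\iota_\nu dU^\flat
=\bar\nabla_rU^\flat+(1-2\alpha)\,\bar\nabla U_r\big|_{\mathrm{tan}}=0 .
\end{equation*}
At $\alpha=0$ this is Navier and at $\alpha=1$ it is vorticity-free. Substituting the two formulas above gives $(g_s)_{ij}\partial_rU^j-2\alpha\,\II(s)_{ij}U^j=0$, which is exactly \eqref{eq:alpha-wall}.

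The main obstacle is purely one of sign and index conventions, not of analysis. The cancellation claimed in (a) and the factor $2$ in (b) both hinge on the relative sign between $\partial_rg_r=-2\II(r)$ and $\bar\Gamma^k{}_{ir}=-S^k{}_i(r)$. A single sign slip would turn the Navier condition into $\partial_rU=\pm2SU$ and spoil the dichotomy. I will therefore verify these conventions on the round sphere of radius $R$ before trusting the general computation. For outward $N$, $S=-R^{-1}\mathrm{Id}$ under $S(X)=-\bar\nabla_XN$, and $g_r=(1+r/R)^2g$; this fixes the signs unambiguously.

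I will also note that, because $S(s)$ is evaluated at the wall rather than at $r=0$, all identities above are exact in $s$. This is what the lemma needs for the subsequent two-wall profile analysis.
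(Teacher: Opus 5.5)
Your proposal is correct and is essentially the paper's own argument: the paper likewise computes, exactly in $r$, $2(\Def U)_{ri}=g_{ij}(r)\partial_rU^j+\partial_iU^r$ and $dU^\flat(\partial_r,\partial_i)=g_{ij}(r)\partial_rU^j-2\II_{ij}(r)U^j-\partial_iU^r$, then uses $\partial_iU^r=0$ on the wall to obtain (a) and (b). Your convex combination $(1-\alpha)\,2(\Def U\,\nu)_{\mathrm{tan}}+\alpha\,\iota_\nu dU^\flat$ is exactly the paper's invariant combination $\bar\nabla_rU_i+(1-2\alpha)\bar\nabla_iU_r$, so your treatment of (c) matches as well.
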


\begin{proof}[Sketch]
At level $r$ one has $2(\Def U)_{ri}=g_{ij}(r)\partial_r U^{j}+\partial_i U^{r}$
and $dU^{\flat}(\partial_r,\partial_i)=\partial_r U_i-\partial_i U_r
=g_{ij}(r)\partial_r U^{j}-2\II_{ij}(r)U^{j}-\partial_i U^{r}$, both exact in $r$;
on the wall $\partial_i U^{r}=0$, giving (a) and (b). The invariant combination
in (c) is $\bar\nabla_r U_i+(1-2\alpha)\bar\nabla_i U_r$, whose $\partial_i U^{r}$
term again vanishes on the wall.
\end{proof}

\begin{proposition}[Two-wall profile]\label{prop:profile}
Let $U^{i}(x,r)=u_0^{i}+r\,u_1^{i}+r^{2}u_2^{i}+O(r^{3})$ satisfy
\eqref{eq:alpha-wall} on both walls $\Gamma_\pm$ to the two leading matched
orders. Then, using the wall expansion $S(r)=S+rS^{2}+O(r^{2})$,
\begin{equation}\label{eq:profile}
\partial_r U^{i}\big|_{r=0}=u_1^{i}=2\alpha\,S^{i}{}_{j}u_0^{j},\qquad
\partial_r^{2}U^{i}\big|_{r=0}=2u_2^{i}=2\alpha(1+2\alpha)\,(S^{2})^{i}{}_{j}
u_0^{j}.
\end{equation}
\end{proposition}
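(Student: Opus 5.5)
The plan is to substitute the Taylor profile into the wall relation \eqref{eq:alpha-wall} of Lemma~\ref{lem:wall}(c), evaluated at $r=s=\pm\eps/2$. Then I would use the two walls to separate the resulting identity into its even and odd parts in $s$. Throughout, the coefficients $u_k^{i}(x)$ are treated as independent of $\eps$; this is what ``matched orders'' means here. The components $U^{i}$ are read in Fermi coordinates, so that $S^{i}{}_{j}(s)$ is the mixed tensor of the parallel hypersurface $M_s$.

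\textbf{Step 1 (expansion at a wall).} First I would expand the Riccati solution \eqref{eq:riccati} as $S(s)=S(\mathrm{Id}-sS)^{-1}=S+sS^{2}+O(s^{2})$. Differentiating the profile gives $\partial_r U^{i}|_{r=s}=u_1^{i}+2s\,u_2^{i}+O(s^{2})$, where the $O(r^{3})$ remainder contributes only $O(s^{2})$. Also $U^{j}|_{r=s}=u_0^{j}+s\,u_1^{j}+O(s^{2})$. Inserting these into \eqref{eq:alpha-wall} gives, at each wall,
\[
u_1^{i}+2s\,u_2^{i}=2\alpha\bigl(S^{i}{}_{j}u_0^{j}+s\,[(S^{2})^{i}{}_{j}u_0^{j}+S^{i}{}_{j}u_1^{j}]\bigr)+O(s^{2}).
\]

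\textbf{Step 2 (two-wall separation).} Write this identity as $A+sB=O(s^{2})$ with
\[
A=u_1-2\alpha Su_0,\qquad B=2u_2-2\alpha(S^{2}u_0+Su_1),
\]
and impose it at both $s=+\eps/2$ and $s=-\eps/2$. Adding the two relations gives $A=O(\eps^{2})$. Subtracting them and dividing by $\eps$ gives $B=O(\eps)$. Since $A$ and $B$ do not depend on $\eps$, both vanish.

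This step is where the two walls are essential. On a single wall only the combination $A+sB$ is constrained, and $u_2$, hence $\partial_r^{2}U|_{r=0}$, stays free; this is the remark made before Lemma~\ref{lem:wall}.

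\textbf{Step 3 (solve).} From $A=0$ we get $u_1=2\alpha Su_0$, which is the first formula in \eqref{eq:profile}. Substituting this into $B=0$ gives
\[
2u_2=2\alpha S^{2}u_0+2\alpha S(2\alpha Su_0)=2\alpha(1+2\alpha)S^{2}u_0,
\]
which is the second formula, since $\partial_r^{2}U|_{r=0}=2u_2$.

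The main point of care is Step 2, namely justifying that the order-$s$ coefficient is genuinely determined. I would handle it exactly by the sum and difference argument above, which does not require the coefficients to be analytic in $s$. A secondary check is that no further $O(s)$ contributions are lost. Such contributions could come only from the metric $g_{ij}(s)$ in the reduction of Lemma~\ref{lem:wall}. However, that lemma already states the condition in the index-raised form $\partial_r U^{i}=2\alpha S^{i}{}_{j}(s)U^{j}$, exact in $r$, so the metric contributions have been absorbed.
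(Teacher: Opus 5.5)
Your proposal is correct and matches the paper's proof: you expand \eqref{eq:alpha-wall} at $s=\pm\eps/2$ using $S(s)=S+sS^{2}+O(s^{2})$. Adding the two wall relations gives $u_1=2\alpha Su_0$ at order $\eps^{0}$, and subtracting and dividing by $\eps$ gives $2u_2=2\alpha(S^{2}u_0+Su_1)=2\alpha(1+2\alpha)S^{2}u_0$, with your $A$, $B$ bookkeeping just making explicit the order-by-order matching the paper leaves implicit.
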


\begin{proof}
Adding \eqref{eq:alpha-wall} at $s=\pm\eps/2$ cancels the odd part and gives
$u_1=2\alpha Su_0$ at order $\eps^{0}$; subtracting and dividing by $\eps$ gives
$2u_2=2\alpha(S^{2}u_0+Su_1)=2\alpha(1+2\alpha)S^{2}u_0$.
\end{proof}

The wall-evaluated shape operator $S(s)$ in \eqref{eq:alpha-wall} is essential.
Freezing the geometry at the mid-surface, $\partial_r U^{i}=2\alpha S^{i}{}_{j}(0)
U^{j}$, yields $u_1=2\alpha Su_0$ as before but $2u_2=4\alpha^{2}S^{2}u_0$, which
at $\alpha=1$ gives $4S^{2}u_0$ in place of $6S^{2}u_0$ and an operator that is
\emph{not} the Hodge Laplacian. The effective operator is thus sensitive to the
invariant content of the wall condition, not merely to its restriction to the
mid-surface; this is the general-hypersurface form of the method-dependence found
on the ellipsoid by Chan, Czubak and Fuster Aguilera~\cite{CCF25}, and our
statements always refer to the invariant conditions of Lemma~\ref{lem:wall}.

\subsection{The effective operator, the endpoints, and the dichotomy}
\label{sub:operator}

Substituting the profile \eqref{eq:profile} into the boundary-shear term
$\Frad^{i}=\partial_r^{2}U^{i}-(nH\delta^{i}{}_{j}+2S^{i}{}_{j})\partial_r U^{j}$
and using $nH\,S-S^{2}=\Ric$ gives
\begin{equation}\label{eq:Frad-alpha}
\Frad=\bigl[2\alpha(1+2\alpha)-4\alpha\bigr]S^{2}u_0-2\alpha\,nH\,Su_0
=-2\alpha\,\Ric\,u_0-4\alpha(1-\alpha)\,S^{2}u_0 .
\end{equation}
Hence, by \eqref{eq:decomp}, the effective viscous operator on the mid-surface is
the one-parameter family
\begin{equation}\label{eq:family}
\Delta_\alpha=\LapDef-2\alpha\,\Ric-4\alpha(1-\alpha)\,S^{2},
\qquad \alpha\in[0,1],
\end{equation}
which couples to the extrinsic geometry, through the square of the shape
operator, only in the intermediate regime $0<\alpha<1$. The two endpoints are the
two operators of Section~\ref{sec:setup}:
\begin{equation}\label{eq:endpoints}
\Delta_0=\LapDef=\LapB+\Ric\quad(\text{Navier slip}),\qquad
\Delta_1=\LapDef-2\Ric=\LapB-\Ric=\LapH\quad(\text{Hodge}),
\end{equation}
the last identity being the Weitzenb\"ock relation $\LapB=\LapH+\Ric$. This is the
formal form of the theorem.

\begin{theorem}[Formal selection and dichotomy]\label{thm:formal}
On an arbitrary smooth hypersurface $M^{n}\hookrightarrow\R^{n+1}$, the formal
thin-shell limit of the viscous operator under the invariant wall condition of
parameter $\alpha$ is $\Delta_\alpha$ of \eqref{eq:family}. In particular the
stress-free wall condition selects the deformation Laplacian and the
vorticity-free wall condition selects the Hodge Laplacian, and the two limit
operators differ, on divergence-free fields and regardless of the extrinsic
geometry, by twice the Ricci curvature,
\begin{equation}\label{eq:dichotomy}
L^{\mathrm{slip}}=L^{\mathrm{H}}+2\Ric,\qquad\text{equivalently } +2K \text{ on a
surface.}
\end{equation}
\end{theorem}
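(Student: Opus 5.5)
The proof assembles the pieces already laid out in Section~\ref{sec:formal}. The plan is to evaluate the ambient operator acting on the shell field at the mid-surface, using the decomposition \eqref{eq:decomp}, and then substitute the profile forced by the two walls. First, I would justify that the formal thin-shell limit of the viscous operator is the mid-surface value of $\LapB^{(n+1)}U$ for a field $U$ that is tangential at $r=0$. On solenoidal fields in flat ambient space one has $2\Div\Def U=\LapB^{(n+1)}U+\nabla\Div U$, and the gradient term is absorbed by the pressure, that is, by the projection $\Psig$. Averaging across the shell differs from evaluation at $r=0$ only at $O(\eps^2)$, since $U$ is smooth in $r$ and the walls sit symmetrically at $\pm\eps/2$. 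With this, \eqref{eq:decomp} splits the operator into the intrinsic piece $\LapDef^{(n)}u_0$ and the radial term $\Frad$, which depends only on $\partial_rU|_{0}$ and $\partial_r^2U|_{0}$.

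Second, I would invoke Lemma~\ref{lem:wall}(c) on both walls to reduce the wall condition of parameter $\alpha$ to \eqref{eq:alpha-wall} with $S(\pm\eps/2)$. Proposition~\ref{prop:profile} then gives $u_1=2\alpha Su_0$ and $2u_2=2\alpha(1+2\alpha)S^2u_0$. Substituting these into $\Frad$ and using the Gauss equation \eqref{eq:gauss} in the form $nHS-S^2=\Ric$ yields \eqref{eq:Frad-alpha}, and hence $\Delta_\alpha$ in \eqref{eq:family}. The endpoints follow by setting $\alpha=0$ and $\alpha=1$, where the $S^2$ coupling $4\alpha(1-\alpha)$ vanishes. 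The case $\alpha=1$ reduces to $\LapH$ through the Weitzenb\"ock identity $\LapB=\LapH+\Ric$.

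Third, I would derive the dichotomy \eqref{eq:dichotomy}. Applying $\Psig$ to the endpoint operators gives $\Psig(\LapB+\Ric)=\Psig(\LapH+2\Ric)$, which matches $L^{\mathrm{slip}}$ of \eqref{eq:limitops}, and $\Psig\LapH=L^{\mathrm H}$. Their difference is $2\Psig\Ric$, which is $2K$ on a surface since there $\Ric=Kg$. I also need to check that the leading tangential field $u_0$ is divergence-free on $M$. This follows from expanding $\Div_{G}U=0$ at order $\eps^0$: the normal contribution $\partial_rU^r$ vanishes to leading order because $U^r=0$ on both walls and $U^r$ has no $O(1)$ normal profile.

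The main obstacle is the first step, namely the consistency of the formal expansion. Several things have to be checked together: that the pressure and normal component $U^r$ enter only at orders that do not feed back into the tangential $O(1)$ operator, that the odd and even orders match as assumed in Proposition~\ref{prop:profile}, and that evaluation at $r=0$ agrees with the thickness average. I would address this by positing the ansatz $U^r=O(\eps^2)$ and $p=p_0(x)+O(\eps^2)$, and then verifying order by order that the normal momentum equation determines the $r$-dependence of $p$ without altering the tangential balance.
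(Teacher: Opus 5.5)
Your proposal is correct and follows the paper's route: the decomposition \eqref{eq:decomp}, the two-wall profile from Lemma~\ref{lem:wall}(c) and Proposition~\ref{prop:profile}, the Gauss collapse $nHS-S^{2}=\Ric$ giving \eqref{eq:Frad-alpha} and hence \eqref{eq:family}, the Weitzenb\"ock identity at $\alpha=1$, and the Leray projection for \eqref{eq:dichotomy}. The paper takes the mid-surface evaluation and the solenoidality of $u_0$ as part of the formal framework, so your consistency checks go beyond it; one small correction there is that the pressure ansatz should be $p=p_0(x)+O(\eps)$ rather than $O(\eps^{2})$, because the normal momentum balance generally forces $\partial_r p=O(1)$ at $r=0$ (for instance whenever the forcing has a normal component). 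This is harmless, since only $\gradg p_0$ enters the tangential balance at $r=0$.
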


\subsection{The solenoidal expansion and its corrector}\label{sub:corrector}

For the expansion of Proposition~\ref{prop:profile} to represent an
\emph{incompressible} flow, and to serve as a recovery sequence for the
divergence-constrained forms of Section~\ref{sec:setup}, the shell field must be
solenoidal to the order at which the energy is read. Let $v$ be a surface field
with $\Divg v=0$, and take $u_0=v$ with $u_1,u_2$ from \eqref{eq:profile}. In
Fermi coordinates the shell divergence is exactly
\[
\Div_{G_\eps}U=\Div_{g_r}U^{t}+\partial_r U^{r}
+U^{r}\,\partial_r\log\det(\mathrm{Id}-rS),
\qquad
\Div_{g_r}X=\Divg X+X\!\cdot\!\gradg\log\det(\mathrm{Id}-rS),
\]
the second identity because the divergence of a vector field involves only the
volume form. Expanding $\log\det(\mathrm{Id}-rS)=-rnH+O(r^{2})$, using the
profile $U^{t}=v+2\alpha rSv+O(r^{2})$, and using the Codazzi identity of a
hypersurface in flat space, $\Divg S=n\gradg H$, so that
$\Divg(Sv)=n\,v\!\cdot\!\gradg H+\langle S,\Def_g v\rangle$, the
order-$\eps^{0}$ divergence is $\Divg v=0$, while the order-$\eps$ term does not
vanish in general:
\begin{equation}\label{eq:defect}
\Div_{G_\eps}U=\eps\,z\,\Bigl[(2\alpha-1)\,n\,v\!\cdot\!\gradg H
+2\alpha\,\langle\mathring S,\Def_g v\rangle\Bigr]+O(\eps^{2}),
\end{equation}
where only the traceless part $\mathring S=S-Hg$ of the shape operator couples,
since $\operatorname{tr}\Def_g v=\Divg v=0$. The defect thus has two parts: a
mean-curvature part, with coefficient $(2\alpha-1)n$ that changes sign across the
family and vanishes at the half-slip point $\alpha=\tfrac12$, and, for
$\alpha>0$, a coupling of the traceless second fundamental form to the surface
strain. It vanishes for all $\alpha$ on totally umbilic surfaces, and in
particular on spheres, which is why it is invisible in the spherical case; on
constant-mean-curvature surfaces the mean-curvature part vanishes but the strain
coupling survives for $\alpha>0$, so only the stress-free defect vanishes on all
CMC surfaces. It is removed by a normal corrector of order $\eps^{2}$, vanishing
on the walls,
\begin{equation}\label{eq:corrector}
U^{n}\ \longmapsto\ U^{n}+\eps^{2}\bigl(z^{2}-\tfrac14\bigr)\,q(x),
\qquad
q=-\tfrac12(2\alpha-1)\,n\,v\!\cdot\!\gradg H
-\alpha\,\langle\mathring S,\Def_g v\rangle,
\end{equation}
whose normal derivative cancels \eqref{eq:defect} at order $\eps$ while
contributing to the energy only at order $\eps^{2}$, so that neither the limit
operator \eqref{eq:family} nor the leading energy is changed. The corrector is
therefore invisible to the operator computation of \S\ref{sub:operator} and
becomes visible only in the divergence-exact recovery sequence of Section~\ref{sec:rigorous}. The normal corrector is moreover the only zero-cost mechanism: a
tangential corrector at order $\eps^{2}$ affects the divergence only at
$O(\eps^{2})$, while a tangential term linear in $\eps z$ would perturb the
normal profile that closes the perfect squares of
Theorem~\ref{thm:ident} and add a positive $O(1)$ contribution to the energy
density.

On a surface of revolution, writing $\kappa_1,\kappa_2$ for the principal
curvatures and using $(\Def_g v)_{22}=-(\Def_g v)_{11}=-\partial_s v_1$ for
solenoidal $v$, so that
$\langle\mathring S,\Def_g v\rangle=(\kappa_1-\kappa_2)\,\partial_s v_1$ and
$n\,v\!\cdot\!\gradg H=v_1(\kappa_1+\kappa_2)'$, the corrector
\eqref{eq:corrector} specialises at the two endpoints to
\begin{equation}\label{eq:corrector-rev}
q^{\mathrm{slip}}=\tfrac12(\kappa_1+\kappa_2)'\,v_1,
\qquad
q^{\mathrm H}=-\tfrac12(\kappa_1+\kappa_2)'\,v_1
-(\kappa_1-\kappa_2)\,\partial_s v_1 .
\end{equation}
The stress-free corrector $q^{\mathrm{slip}}$ is exactly the corrector
$\eps^{2}(z^{2}-\tfrac14)\tfrac12 H'v_1$ certified by computer algebra in the
rigorous section (with $H=\kappa_1+\kappa_2$ there); the vorticity-free
corrector carries in addition the strain coupling, and the same machine check
(C6) certifies that with \eqref{eq:corrector-rev} the divergence defect cancels
at order $\eps$, the limit densities of Theorem~\ref{thm:ident} are unchanged,
and the energy converges at the sharp rate $O(\eps^{2})$, for both wall
conditions; it certifies moreover that no corrector proportional to
$v\!\cdot\!\gradg H$ alone can cancel the vorticity-free defect on a
non-umbilic meridian.

\section{The rigorous thin-shell limit on surfaces of revolution}\label{sec:rigorous}

On surfaces of revolution the formal results of Section~\ref{sec:formal} become theorems. We
prove that the two energy forms of Section~\ref{sec:setup} converge, in the sense of Mosco, to
the two surface energies, hence that the resolvents, semigroups and spectra
converge, and that the two limit operators are exactly the deformation and Hodge
Laplacians of the dichotomy. The heavy algebra below is certified by computer
algebra and the Korn constant is confirmed numerically.

\subsection{Surface of revolution: frame, scaling, and the fast form}
\label{sub:revsetup}

Let $M$ be a torus-type surface of revolution with unit-speed meridian
$s\mapsto(\rho(s),\zeta(s))$, $\rho>0$, principal curvatures $\kappa_1(s)$
(meridian) and $\kappa_2(s)=\zeta'/\rho$ (azimuthal), shape operator
$S=\mathrm{diag}(\kappa_1,\kappa_2)$ in the principal frame, and Gauss curvature
$K=\kappa_1\kappa_2$. On the rescaled cylinder $\Sigma=M\times(-\tfrac12,\tfrac12)$
with coordinates $(s,\varphi,z)$ the L\'ame coefficients are
$h_1=1-\eps z\,\kappa_1$, $h_2=\rho(1-\eps z\,\kappa_2)$, $h_3=\eps$, and
$d\mu_\eps=h_1h_2\,ds\,d\varphi\,dz$ is the volume per unit rescaled thickness.
Writing the deformation form in the exact physical strain components and
expanding reproduces the scaling of Section~\ref{sec:formal},
$Q^{\mathrm{slip}}_\eps=\eps^{-2}A+\eps^{-1}B+C+O(\eps)$, with the fast form
computed exactly,
\begin{equation}\label{eq:fastform}
A(U)=\int_\Sigma \rho\bigl(|\partial_z u_1|^2+|\partial_z u_2|^2
+2|\partial_z u_3|^2\bigr)\,ds\,d\varphi\,dz,\qquad
\ker A=\{U\ \text{independent of }z\},
\end{equation}
the factor $2$ being the normal-strain weight. All such symbolic identities are
certified by the machine checks (Appendix, C1).

\subsection{The uniform Korn inequality}\label{sub:korn}

Let $\eps_0=\bigl(4(1+\norm{\kappa_1}_\infty+\norm{\kappa_2}_\infty)\bigr)^{-1}$,
so that $h_1,h_2/\rho\in[\tfrac34,\tfrac54]$ for $\eps\le\eps_0$, and let
$\norm{U}_{1,\eps}^2=\sum_i(\norm{\partial_s u_i}^2+\norm{\rho^{-1}\partial_\varphi u_i}^2
+\eps^{-2}\norm{\partial_z u_i}^2)+\norm{U}^2$.

\begin{theorem}[Uniform Korn inequality]\label{thm:korn}
There are $c>0$, $C<\infty$ depending only on $\rho_{\min}$,
$\norm\rho_{C^2}$, $\norm{\kappa_1}_{C^1}$, $\norm{\kappa_2}_{C^1}$ and the
meridian length, such that for all $\eps\le\eps_0$ and all $U\in H^1(\Sigma)^3$
with $u_3=0$ at $z=\pm\tfrac12$,
\[
\norm{E(U)}^2\ \ge\ c\,\norm{U}_{1,\eps}^2\ -\ C\,\norm{U}^2 .
\]
The estimate restricts to each azimuthal mode with the same constants, hence is
uniform in the mode number.
\end{theorem}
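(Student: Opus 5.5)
We would prove Theorem~\ref{thm:korn} by Fourier decomposition in $\varphi$ followed by a perturbative reduction to the flat-layer Korn inequality in the scaled variables. Here $E(U)$ is the physical strain in the L\'ame frame, weighted by $d\mu_\eps$, so that $\norm{E(U)}^2=\tfrac12 Q^{\mathrm{slip}}_\eps(U)$.

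\emph{Step 1: write $E$ as principal part plus remainder.} In physical components on $\Sigma$, with $h_1,h_2,h_3$ as in \S\ref{sub:revsetup}, each strain component splits as
\[
E_{ij}=E^{0}_{ij}+R_{ij}(U).
\]
The principal part $E^0$ is the symmetrised gradient of the "straightened" operator
\[
(\partial_s,\ \rho^{-1}\partial_\varphi,\ \eps^{-1}\partial_z),
\]
acting on $(u_1,u_2,u_3)$. The remainder $R$ collects:
\begin{itemize}
\item the Christoffel-type zero-order terms, with coefficients involving $\kappa_1$, $\kappa_2$ and $\rho'/\rho$; these are bounded by $\norm{\kappa_i}_\infty$, $\norm{\rho}_{C^1}$ and $\rho_{\min}$, but some carry factors $\kappa_i u_3$ or $\kappa_i u_{1,2}$ with no $\eps$;
\item the $\eps z\kappa_i$ corrections to the tangential derivatives, which are $O(\eps)$ times first derivatives.
\end{itemize}
For $\eps\le\eps_0$ the factors $h_1$ and $h_2/\rho$ lie in $[\tfrac34,\tfrac54]$. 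This gives
\[
\norm{R(U)}\le \tfrac14\norm{E^0(U)}+C\norm{U}\quad\text{or}\quad \norm{R(U)}\le \eps C\norm{U}_{1,\eps}+C\norm{U}.
\]
After absorbing, it therefore suffices to prove
\[
\norm{E^0(U)}^2\ge c\norm{U}_{1,\eps}^2-C\norm{U}^2
\]
with a weight $\rho\,ds\,d\varphi\,dz$ comparable to $d\mu_\eps$. The $\eps C\norm{U}_{1,\eps}$ piece is absorbed once $\eps_0$ is shrunk by a factor depending only on the listed norms.

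\emph{Step 2: decompose into azimuthal modes.} Expand
\[
U=\sum_m \hat U_m(s,z)e^{im\varphi}.
\]
The metric coefficients are independent of $\varphi$, so $E^0$, $R$ and all the norms decouple across modes by Parseval. The estimate is then a family of one-parameter inequalities on the 2D domain $(s,z)\in[0,L)\times(-\tfrac12,\tfrac12)$, periodic in $s$, with $\partial_\varphi$ replaced by $im$.

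Within each mode we treat $E^0$ as a Korn problem on a flat anisotropic layer. Rescaling $z'=\eps z$ turns it into the standard symmetric gradient on $\omega\times(0,\eps)$, with $\omega$ the $(s,\varphi)$ cylinder carrying the metric $ds^2+\rho^2d\varphi^2$.

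\emph{Step 3: Korn with the normal boundary condition.} Because $u_3=0$ at $z=\pm\tfrac12$, an integration by parts gives the identity
\[
\int\bigl(|\nabla U|^2-2|E^0(U)|^2+(\Div U)^2\bigr)=\text{lower order}.
\]
On a flat layer with $u_3=0$ on the walls, the boundary term
\[
\int\bigl(u_3\,\Div_{\mathrm{tan}}u-u\cdot\nabla_{\mathrm{tan}}u_3\bigr)
\]
vanishes. The commutators coming from $\rho(s)$ produce terms of the form $\rho'/\rho\cdot u\,\partial u$, bounded by $\delta\norm{U}_{1,\eps}^2+C_\delta\norm{U}^2$. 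This is where $\norm{\rho}_{C^2}$ enters, through one more integration by parts. Since $\int|\nabla U|^2\le2\int|E^0|^2+\int(\Div U)^2$ combined with $(\Div U)^2\le 3|E^0|^2$, we obtain $\norm{E^0}^2\ge c\norm{U}_{1,\eps}^2-C\norm{U}^2$. Every constant comes from pointwise bounds on coefficients that do not depend on $m$, so the mode-uniformity is automatic.

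\emph{Main obstacle.} The delicate point is Step 1. Zero-order terms such as $E_{11}\ni\kappa_1 u_3$ and $E_{13}\ni-\kappa_1 u_1/h_1$ carry no $\eps$ and sit next to the $\eps^{-1}\partial_z$ terms. They are harmless here only because the statement allows the $-C\norm{U}^2$ term: this is exactly the degeneracy on Killing fields noted in \cite{LM11,MiuraI}. We would first verify that every remainder term is either zero-order with a bounded coefficient, or $O(\eps)$ times a scaled derivative, using the explicit physical-strain formulas certified in (C1). We would then check numerically, mode by mode, that the resulting $c$ stays bounded away from $0$.
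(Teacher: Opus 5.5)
Your argument is correct, and it takes a different route from the paper. The paper never freezes coefficients. It reads $\eps^{-1}\partial_z u_3$ and $u_3$ off $e_{33}$ together with a Poincar\'e inequality, and gets $\eps^{-1}\partial_z u_t$ from $e_{13},e_{23}$ modulo $\nabla' u_3$. It then controls $\nabla' u_3$ by a swap lemma: it integrates $\eps^{-1}h_1\,\partial_s u_3\,\partial_z u_1$ by parts in $z$ and then in $s$, so that the term lands on $e_{33}$. Finally it controls $\nabla' u_t$ by the exact Weitzenb\"ock--Korn identity on each curved slice $(M,g_z)$. Your single null-Lagrangian identity for the frozen frame operator $D=(\partial_s,\rho^{-1}\partial_\varphi,\eps^{-1}\partial_z)$ does both jobs at once. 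With the weight $\rho$, the $(s,\varphi)$ and $(\varphi,z)$ cross terms cancel exactly. The $(s,z)$ term leaves only $-2\eps^{-1}\int\rho'\,u_1\,\partial_z u_3$, which is bounded by $C\norm{u_1}\,\norm{E^0_{33}}$. That residual is the paper's swap, and the exact $(s,\varphi)$ cancellation is the frozen, flat version of its slice identity. Only $\rho'$ enters, so your remark about $\norm{\rho}_{C^2}$ is unnecessary. The paper's route is exact at every step, so nothing is absorbed and no smallness beyond $h_1,h_2/\rho\in[\tfrac34,\tfrac54]$ is used. Yours is more elementary and gives an explicit constant, at the price of absorbing the coefficient errors.

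There are several points to tighten.

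\begin{itemize}
\item Do not shrink $\eps_0$; the theorem fixes it, and you do not need to. Use the second form of your remainder bound. The first form, $\norm{R}\le\tfrac14\norm{E^0}+C\norm U$, fails: in $e_{12}$ the two derivatives carry the different factors $1/h_1-1$ and $\rho/h_2-1$, so the error is controlled by the tangential gradient, not by $E^0$.
\item At the stated $\eps_0$ these factors are at most $\tfrac13$ in modulus. So the derivative part $R^{d}$ of $R$ obeys $\norm{R^{d}}\le\tfrac13\norm{DU}$, while your identity gives $\norm{E^0}\ge\tfrac1{\sqrt2}\norm{DU}-C\norm U$. Since $\tfrac13<\tfrac1{\sqrt2}$, the absorption closes as it stands.
\item In Step 3 the divergence enters with a minus sign, $\norm{DU}^2=2\norm{E^0}^2-\norm{\tr DU}^2+\text{l.o.t.}$, so it is simply dropped.
\item $D$ is not the covariant symmetric gradient of $ds^2+\rho^2d\varphi^2$, because its Christoffel terms sit in $R$. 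Do the integration by parts for $D$ directly with the weight $\rho$.
\item No numerical check is needed for mode uniformity. An inequality valid on all of $H^1$ holds with the same constants on each azimuthal mode subspace.
\end{itemize}
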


The G\aa rding correction $-C\norm{U}^2$ cannot be removed: the rotation field
$U^{\mathrm{rot}}=(0,h_2,0)$ is a Killing field of the shell for every $\eps$,
with $E(U^{\mathrm{rot}})\equiv0$, so it lies in the kernel of the strain and
forces the $L^2$ term. This is exactly the obstruction identified for general
shells by Lewicka and M\"uller~\cite{LM11} and, in the Navier setting, by
Miura~\cite{MiuraI}; for a general hypersurface the inequality is theirs, and the
role of Theorem~\ref{thm:korn} is to give a short, self-contained proof, with a
concrete and numerically verifiable constant, in the revolution case. The proof
is by weighted integration-by-parts swaps that trade a controlled tangential
derivative for a normal one at the cost of curvature commutators; the Korn
constant is independently confirmed to be bounded below, uniformly in $\eps$ and
in the mode number, by a finite-element computation on a torus (Appendix).

\subsection{The uniform Gaffney inequality}\label{sub:gaffney}

The Hodge form requires its own coercivity, which is not Korn but a Gaffney
(div-curl) estimate.

\begin{lemma}[Uniform Gaffney inequality]\label{lem:gaffney}
With the constants of Theorem~\ref{thm:korn}, for all $\eps\le\eps_0$ and all
$U\in H^1(\Sigma)^3$ with $u_3=0$ at $z=\pm\tfrac12$,
\[
\int_\Sigma\bigl(|\Curl U|^2+(\Div U)^2\bigr)\,d\mu_\eps\ \ge\
c\,\norm{U}_{1,\eps}^2\ -\ C\,\norm{U}^2 .
\]
\end{lemma}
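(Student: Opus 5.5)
The plan is to reduce the Gaffney estimate to the Korn estimate of Theorem~\ref{thm:korn}. The reduction goes through a pointwise identity relating the Hodge density to the deformation density, followed by an integration by parts. On a Riemannian manifold with boundary one has, for any vector field $U$,
\[
|\Curl U|^2+(\Div U)^2 \;=\; 2|\Def U|^2-\tfrac12|\nabla U|^2\cdot 0 + \bigl[(\Div U)^2-\tr\bigl((\nabla U)^2\bigr)\bigr]+\ldots
\]
More cleanly, I will use the two identities $|\nabla U|^2=|\Def U|^2+\tfrac12|\Curl U|^2$ and $|\Def U|^2-\tfrac12|\Curl U|^2=\tr((\nabla U)^2)$. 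Together they give
\[
|\Curl U|^2+(\Div U)^2 \;=\; 2|\Def U|^2+\Bigl[(\Div U)^2-\tr\bigl((\nabla U)^2\bigr)\Bigr].
\]
The bracket is a divergence plus a Ricci term. The shell metric $G_\eps$ is flat (it is the Euclidean metric in Fermi coordinates), so the Ricci term vanishes, and
\[
(\Div U)^2-\tr\bigl((\nabla U)^2\bigr)=\Div\bigl(U\,\Div U-\nabla_U U\bigr).
\]
Integrating over the shell with respect to $d\mu_\eps$ therefore gives
\[
Q^{\mathrm H}_\eps(U)=Q^{\mathrm{slip}}_\eps(U)+\eps^{-1}\!\!\int_{\partial\Sigma}\bigl\langle U\Div U-\nabla_UU,\nu\bigr\rangle\,dA_\eps ,
\]
where the factor $\eps^{-1}$ comes from the rescaled volume per unit thickness. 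The first step is to verify this identity with the correct normalisation. The computation is routine, and I would certify it by the same computer-algebra checks as in C1.

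The second step is to evaluate the boundary term using $u_3=0$ on $z=\pm\tfrac12$. Since $U$ is tangent to the walls, $\langle U,\nu\rangle=0$ there. What remains is $-\langle\nabla_UU,\nu\rangle=\II_{\Gamma_\pm}(U,U)$, the second fundamental form of the wall evaluated on the tangential field. This is the classical Gaffney boundary term. On the two walls the normals are opposite and the wall shape operators are $\pm S(\pm\eps/2)$, so the boundary contribution is
\[
\eps^{-1}\Bigl(\int_{\Gamma_+}\II_+(U,U)-\int_{\Gamma_-}\II_-(U,U)\Bigr),
\]
with $\II_\pm$ built from $S(r)$ of \eqref{eq:riccati}. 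This expression is $O(1)$ times a difference quotient in $z$ of $\langle S(\eps z)U,U\rangle h_1h_2$. I would write it as an interior integral,
\[
\eps^{-1}\int_\Sigma\partial_z\bigl(\langle S(\eps z)U,U\rangle h_1h_2\bigr)\,ds\,d\varphi\,dz,
\]
and expand. The derivative falling on $S(\eps z)h_1h_2$ produces $\eps\cdot O(1)$, which after the $\eps^{-1}$ prefactor gives an $O(\norm{U}^2)$ term. The derivative falling on $U$ produces a term bounded by $C\norm{U}\,\eps^{-1}\norm{\partial_zU}$. By Young's inequality this is at most $\delta\,\eps^{-2}\norm{\partial_zU}^2+C_\delta\norm{U}^2$, and the first piece is at most $\delta\norm{U}_{1,\eps}^2$.

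The third step combines the two. Since $Q^{\mathrm{slip}}_\eps=2\norm{E(U)}^2$ up to the normalisation of $E$, Theorem~\ref{thm:korn} gives
\[
Q^{\mathrm H}_\eps(U)\ge c\norm{U}_{1,\eps}^2-C\norm{U}^2-\delta\norm{U}_{1,\eps}^2-C_\delta\norm{U}^2 .
\]
Choosing $\delta=c/2$ yields the lemma, with constants depending only on $\norm{\kappa_i}_{C^0}$ and the Korn data. The argument never separates azimuthal modes, so uniformity in the mode number is inherited from Theorem~\ref{thm:korn}.

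I expect the main obstacle to be the boundary term, and in particular confirming that it is genuinely controlled by $\eps^{-1}\norm{U}\norm{\partial_zU}$ and not something worse. Naively it is $\eps^{-1}\int_{\partial\Sigma}|U|^2$, which is only bounded via the trace inequality by $\eps^{-1}\norm{U}(\norm{U}+\norm{\partial_zU})$. The first piece of that, $\eps^{-1}\norm{U}^2$, is not uniform. The cancellation between the two walls, which have opposite orientation and nearly equal curvature, is what removes it, so the sign bookkeeping of $\II_\pm$ must be done carefully. If the difference-quotient rewriting proves awkward, the fallback is to subtract $\int_{M}\II(\bar U,\bar U)$ for the $z$-average $\bar U$ from each wall term separately. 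One would then use $\norm{U-\bar U}_{L^2(\Gamma_\pm)}\le C\norm{\partial_zU}$ on the unit cylinder.
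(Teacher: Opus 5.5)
Your first step rests on a false identity, and the problem is not normalisation. Your two auxiliary identities are correct, but together they give $|\Curl U|^2=2|\Def U|^2-2\tr((\nabla U)^2)$, and hence
\[
|\Curl U|^2+(\Div U)^2=2|\Def U|^2-(\Div U)^2+2\Div\bigl(U\Div U-\nabla_UU\bigr),
\]
not $2|\Def U|^2+\Div(U\Div U-\nabla_UU)$. Already $U=(x,0,0)$ gives $1$ on the left and $2$ on your right-hand side.

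So $Q^{\mathrm H}_\eps=Q^{\mathrm{slip}}_\eps+(\text{boundary term})$ is false. For $U$ supported in the interior of the shell the boundary term vanishes, and $Q^{\mathrm H}_\eps=Q^{\mathrm{slip}}_\eps-\int_\Sigma(\Div U)^2\,d\mu_\eps$, which is strictly smaller than $Q^{\mathrm{slip}}_\eps$ whenever $\Div U\not\equiv0$. The discrepancy is not lower order: the density $2|\Def U|^2-(\Div U)^2$ equals $-|\Def U|^2$ when $\Def U$ is a multiple of the identity. It therefore cannot be absorbed into the Korn bound, and the inequality you feed into Theorem~\ref{thm:korn} in your third step is unjustified as written.

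The repair is short and leads back to the paper's argument. The correct pointwise identity is the Friedrichs form $|\Curl U|^2+(\Div U)^2=|\nabla U|^2+\Div(U\Div U-\nabla_UU)$, which is exactly what the paper integrates. With $U\cdot\nu=0$ the boundary term is the wall second fundamental form. Your treatment of that term is correct and is the paper's two-wall cancellation: the two walls enter with opposite signs, you apply the fundamental theorem of calculus across the shell, and you use Young's inequality against $\eps^{-1}\norm{\partial_zU}$. Your remark that each wall separately would cost $\eps^{-1}\norm{U}^2$ is also right.

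From there you have two options.
\begin{itemize}
\item Follow the paper, which rescales $\int|\nabla u|^2$ directly into $\norm{U}_{1,\eps}^2$, paying only Christoffel commutators of size $O(|U|)$, and never invokes Korn.
\item Keep your endgame: use $|\nabla U|^2\ge|\Def U|^2$ to get $Q^{\mathrm H}_\eps\ge\tfrac12Q^{\mathrm{slip}}_\eps+(\text{boundary term})$, then apply Theorem~\ref{thm:korn} as planned, with constants reduced by a factor $\tfrac12$.
\end{itemize}
The second option spares you the frame comparison, but the Gaffney estimate then depends on the Korn inequality instead of standing on its own.
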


The mechanism is the Friedrichs identity on the flat shell,
$\int|\nabla u|^2=\int(|\Curl u|^2+(\Div u)^2)-\int_{\partial}\mathrm{I\!I}_\partial(u_t,u_t)$,
whose boundary term runs over the two walls with opposite outward normals. The
wall second fundamental forms are $S_{\pm\eps/2}$ with opposite signs, so per unit
thickness the two boundary contributions combine into a single interior integral
of $\partial_z$ of the wall quantity, which is $O(1)$ against
$|u_t||\partial_z u_t|+\eps|u_t|^2$ and is absorbed by the fast form and a
G\aa rding term. This two-wall cancellation of the curvature boundary terms is
the Hodge-side counterpart of the exact strain cancellation, and it is genuinely
new: Miura's uniform estimates are for the deformation (Navier) form only.

\subsection{Identification of the reduced forms}\label{sub:ident}

\begin{theorem}[Exact structure of the limit densities]\label{thm:ident}
For the ansatz $U^\eps=(v_1+\eps w_1,\,v_2+\eps w_2,\,\eps^2 w_3)$, the
$\eps\to0$ energy densities close into perfect squares,
\begin{align}
2|\Def_G U^\eps|^2 h_1h_2&\to
\rho\bigl[\,2|\Def_g v|^2+(\partial_z w_1+\kappa_1 v_1)^2
+(\partial_z w_2+\kappa_2 v_2)^2\bigr],\label{eq:slipdensity}\\
\bigl(|\Curl U^\eps|^2+(\Div U^\eps)^2\bigr)h_1h_2&\to
\rho\bigl[(\curlg v)^2+(\Divg v)^2+(\partial_z w_1-\kappa_1 v_1)^2
+(\partial_z w_2-\kappa_2 v_2)^2\bigr].\label{eq:hodgedensity}
\end{align}
A normal fluctuation of order $\eps$ adds only nonnegative squares. Minimising
over the (unconstrained) fluctuation gives $\partial_z w_t=\mp Sv$, the squares
vanish, and the reduced forms are exactly the surface energies of Section~\ref{sec:setup}.
\end{theorem}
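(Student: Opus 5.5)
The plan is to compute both densities directly in the orthogonal frame of the shell and read off the limit as $\eps\to0$. On $\Sigma$ with coordinates $(s,\varphi,z)$ and Lam\'e coefficients $h_1=1-\eps z\kappa_1$, $h_2=\rho(1-\eps z\kappa_2)$, $h_3=\eps$, I will write the physical strain components of $U^\eps=(v_1+\eps w_1,\,v_2+\eps w_2,\,\eps^2 w_3)$. These are the standard orthogonal-curvilinear formulas, for instance
\[
E_{11}=\frac{\partial_s u_1}{h_1}+\frac{u_2\,\partial_\varphi h_1}{h_1h_2}+\frac{u_3\,\partial_z h_1}{h_1h_3},\qquad
2E_{13}=\frac{h_1}{h_3}\partial_z\Bigl(\frac{u_1}{h_1}\Bigr)+\frac{h_3}{h_1}\partial_s\Bigl(\frac{u_3}{h_3}\Bigr),
\]
and analogously for the remaining components. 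The curl and divergence follow from the same Lam\'e data. I then expand each component in powers of $\eps$ at fixed $z$. The key observations are these:
\begin{itemize}
\item $\partial_z h_i/h_3=-\kappa_i+O(\eps)$ for $i=1,2$.
\item $u_3/h_3=\eps w_3$, so the normal fluctuation enters $E_{33}$, $E_{i3}$ and $\Div$ only through $\partial_z w_3$ and tangential derivatives of $\eps w_3$.
\item The tangential-tangential block tends to $\Def_g v$ in the principal frame, with all $w$-contributions of order $\eps$.
\end{itemize}

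For the mixed components I compute
\[
\frac{h_1}{h_3}\partial_z\Bigl(\frac{u_1}{h_1}\Bigr)=\eps^{-1}\partial_z u_1-\frac{u_1\,\partial_z h_1}{\eps h_1}.
\]
Substituting $u_1=v_1+\eps w_1$ gives $\partial_z w_1+\kappa_1 v_1+O(\eps)$, and the same holds with index $2$. So $2E_{13}\to\partial_z w_1+\kappa_1v_1$, and the factor $2\cdot2\cdot(E_{13})^2=(2E_{13})^2$ produces exactly the square in \eqref{eq:slipdensity}. Meanwhile $E_{33}=\eps\,\partial_z w_3\cdot(\text{stuff})\to0$, which is consistent with $\ker A$ in \eqref{eq:fastform}. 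Multiplying by $h_1h_2\to\rho$ yields \eqref{eq:slipdensity}.

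For the Hodge density the curl components tangential to the walls are
\[
(\Curl U)_2=\frac1{h_1h_3}\bigl(\partial_z(h_1u_1)-\partial_s(h_3u_3)\bigr),
\]
and the analogous expression for $(\Curl U)_1$. Here $\partial_z h_1=-\eps\kappa_1$ flips the sign, giving $\partial_z w_1-\kappa_1v_1$. This is the same sign flip as in Lemma~\ref{lem:wall}(b), where $2\II$ is subtracted. The normal curl component tends to $\curlg v$, and $\Div U^\eps\to\Divg v+\partial_z w_3\cdot O(\eps)$ after including the $u_3\partial_z\log(h_1h_2)/h_3$ term. The claim that a normal fluctuation adds only nonnegative squares means checking two things. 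First, at order $\eps$ in $u_3$ the new terms $\partial_z w_3$ appear only inside $E_{33}$, or in $\Div$ as a separate additive square. Second, the cross terms with the $v$-squares vanish identically, which I would verify from the block structure.

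The final step is elementary. Each square involves $\partial_z w_t$ freely, with no constraint linking different $z$ or coupling to $v$. Choosing $w_t=\mp z\,Sv$ makes all fluctuation squares vanish, and this choice matches $u_1=2\alpha Sv$ from Proposition~\ref{prop:profile} at $\alpha=0,1$. The remainder is $2|\Def_g v|^2$, respectively $(\curlg v)^2+(\Divg v)^2$, which is $|dv^\flat|^2+|\delta v^\flat|^2$ in two dimensions. Integrating over $z$ against $\rho\,ds\,d\varphi$ (that is, $dV_g$) gives \eqref{eq:limitforms}.

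The main obstacle is bookkeeping rather than ideas. The $\eps^{-1}$ and $\eps^{-2}$ terms must cancel exactly, which needs $v$ independent of $z$ (so that $A=0$), and the $O(1)$ cross terms between the $\eps w$ pieces and tangential derivatives must be shown to drop. I would organise this by expanding $E=\eps^{-1}E^{(-1)}+E^{(0)}+O(\eps)$ with $E^{(-1)}=0$ for this ansatz, and then read off $|E^{(0)}|^2$ blockwise, using the machine check (C1) to confirm the perfect-square closure.
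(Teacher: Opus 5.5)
Your computation of the two limit densities is correct and follows the paper's own route. The paper likewise writes the exact orthogonal-frame components from the Lam\'e coefficients (its strain list in the appendix, with curl and divergence handled the same way), expands at fixed $z$, and reads off the squares; it delegates the algebra to checks C2, C2b, C3. Your pair $2E_{13}\to\partial_z w_1+\kappa_1 v_1$ versus $(\Curl U)_2\to\partial_z w_1-\kappa_1 v_1$ is exactly the sign flip the paper identifies. One small slip: the quantity tending to $-\kappa_i$ is $\partial_z h_i/(h_ih_3)$, not $\partial_z h_i/h_3$, which equals $-\rho\kappa_2$ for $i=2$.

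One step fails as you plan it: the normal-fluctuation clause on the Hodge side. With $u_3=\eps w_3$, the divergence term $\frac{1}{\eps h_1h_2}\partial_z(h_1h_2u_3)$ equals $\partial_z w_3+O(\eps)$, so $(\Div U)^2h_1h_2\to\rho\,(\Divg v+\partial_z w_3)^2$. The fluctuation therefore does not enter $\Div$ as a separate square. The cross term $2\rho\,\Divg v\,\partial_z w_3$ is present pointwise, and no block-structure argument will make it vanish identically. What rescues the claim is the $z$-integration. Since $\rho\,\Divg v$ is $z$-independent and $\int_{-1/2}^{1/2}\partial_z w_3\,dz=0$ by impermeability, $w_3(\pm\tfrac12)=0$, the reduced form gains exactly $\rho\int(\partial_z w_3)^2\,dz$. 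Alternatively, the cross term is absent for solenoidal $v$, which is the setting of the limit forms. On the slip side your pointwise statement is right: the fluctuation enters only through $e_{33}=\partial_z w_3$, adding $2(\partial_z w_3)^2$.

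Your remark that $w_t=\mp zSv$ ``matches'' $u_1=2\alpha Sv$ holds only after converting the coordinate components of Proposition~\ref{prop:profile} to physical ones, $u_i^{\mathrm{phys}}=h_iU^i$. That conversion shifts the first-order coefficient by $-Sv$, giving physical slope $(2\alpha-1)Sv$. Read literally, the match fails: at $\alpha=0$ you have $-Sv\neq0$, and at $\alpha=1$ you have $Sv\neq2Sv$.
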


The decisive feature is the sign: the shape operator enters the deformation
density through the \emph{symmetric} combination $\partial_z w_t+Sv$ and the
Hodge density through the \emph{antisymmetric} combination $\partial_z w_t-Sv$.
This single sign flip is the entire source of the $2K$ shift. The Schur
correction of the general reduction therefore vanishes identically, so no
expansion remainder needs to be estimated; the perfect-square structure is
certified by the machine checks C2, C2b, C3.

\subsection{The dichotomy of limit operators}\label{sub:dichotomy}

\begin{theorem}[The Ricci shift]\label{thm:dichotomy}
On a closed surface with Gauss curvature $K$, every tangential field $v$
satisfies
\begin{equation}\label{eq:opidentity}
2\,\Div_g\Def_g v=\LapH v+\gradg(\Divg v)+2K\,v,\qquad \LapH=-(d\delta+\delta d),
\end{equation}
so the operators of the two limit forms are, on solenoidal fields,
$L^{\mathrm{slip}}=P_\sigma(\LapH+2K)$ and $L^{\mathrm H}=P_\sigma\LapH$, with
$L^{\mathrm{slip}}-L^{\mathrm H}=2K=2\Ric$.
\end{theorem}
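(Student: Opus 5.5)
The identity \eqref{eq:opidentity} is pointwise and local, so I would prove it in index notation using the Levi-Civita connection of $g$. From $(\Def_g v)_{ij}=\tfrac12(\nabla_i v_j+\nabla_j v_i)$ we get
\[
2(\Divg\Def_g v)_j=\nabla^i\nabla_i v_j+\nabla^i\nabla_j v_i .
\]
The first term is the Bochner (rough) Laplacian $\LapB v$. For the second, I would commute covariant derivatives with the Ricci identity:
\[
\nabla^i\nabla_j v_i=\nabla_j\nabla^i v_i+\Ric_{jl}v^l=\gradg(\Divg v)_j+\Ric_{jl}v^l .
\]
This gives the intermediate identity $2\Divg\Def_g v=\LapB v+\gradg\Divg v+\Ric\,v$, which is the general-dimension statement $\LapDef=\LapB+\Ric$ of \eqref{eq:decomp} plus the gradient of the divergence. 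Next I would invoke the Weitzenböck formula for $1$-forms, $\LapB=\LapH+\Ric$, with $\LapH=-(d\delta+\delta d)$ under the sign convention of \eqref{eq:limitops}. Substituting yields $2\Divg\Def_g v=\LapH v+\gradg\Divg v+2\Ric\,v$. In dimension two $\Ric=Kg$, so $\Ric\,v=Kv$, which is \eqref{eq:opidentity}.

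The step I expect to need the most care is keeping the sign conventions consistent. $\LapH$ here is the negative semidefinite $-(d\delta+\delta d)$, and the Ricci identity sign must match it. I would pin both down on the unit sphere with a Killing field $v$. Such a field has $\Def_g v=0$ and $\Divg v=0$, and the standard fact $\LapB v=-\Ric\,v$ for Killing fields (here $=-v$) then forces $\LapH v=-2Kv$. This is consistent with \eqref{eq:opidentity}, whose left side vanishes. As a global cross-check I would integrate against $v$. Integration by parts gives $-\int 2|\Def_g v|^2=\int\langle\LapH v,v\rangle-\int(\Divg v)^2+2\int K|v|^2$, which, together with the Bochner formula relating $Q_0^{\mathrm{slip}}$ and $Q_0^{\mathrm H}$ of \eqref{eq:limitforms}, confirms the constants.

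For the operator statement, I would restrict to solenoidal $v$. There the term $\gradg(\Divg v)$ vanishes, and in any case it is annihilated by $\Psig$ because gradients are $L^2$-orthogonal to divergence-free fields. The form $Q_0^{\mathrm{slip}}$ is $2\int|\Def_g v|^2$, and integrating by parts gives $Q_0^{\mathrm{slip}}(v)=-\int\langle 2\Divg\Def_g v,v\rangle$. Its Euler–Lagrange operator on the solenoidal subspace is therefore $\Psig(2\Divg\Def_g)=\Psig(\LapH+2K)$ by \eqref{eq:opidentity}. Likewise $Q_0^{\mathrm H}(v)=-\int\langle\LapH v,v\rangle$, since $d$ and $\delta$ are adjoint on the closed surface, and this gives $L^{\mathrm H}=\Psig\LapH$. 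Subtracting yields $L^{\mathrm{slip}}-L^{\mathrm H}=\Psig(2K\cdot)$. This is the operator $2K=2\Ric$ of the theorem, understood (as in \eqref{eq:limitops}) as acting on solenoidal fields composed with the Leray projection.
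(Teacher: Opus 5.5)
Your proof is correct and is essentially the paper's argument: the Ricci commutation identity gives the Weitzenb\"ock form $2\Divg\Def_g v=\LapB v+\gradg(\Divg v)+\Ric(v)$, and substituting $\LapB=\LapH+\Ric$ with $\Ric=Kg$ yields \eqref{eq:opidentity}. The only difference is that the paper fixes the signs by a meridian-frame symbolic check rather than your Killing-field test, and your reading of $L^{\mathrm{slip}}-L^{\mathrm H}$ as $\Psig(2K\,\cdot)$ on solenoidal fields is the correct interpretation of the statement's shorthand $2K$.
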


Identity \eqref{eq:opidentity} is the two-dimensional Weitzenb\"ock identity
$2\Div\Def v=\LapB v+\nabla\Divg v+\Ric(v)$ together with $\LapH=\LapB-\Ric$; it
is verified directly in the meridian frame, fixing all sign conventions (check
C4). Combined with Theorem~\ref{thm:ident}, it is the rigorous form of the
selection principle: the two wall conditions select two genuinely different
surface operators, differing only by the curvature term.

\subsection{Mosco convergence and its consequences}\label{sub:mosco}

The forms live on the varying spaces $\Hh_\eps=L^2(\Sigma,d\mu_\eps)^3$ with limit
$\Hh_0=L^2(TM)$; we use the Kuwae-Shioya framework for varying Hilbert spaces
\cite{KuwaeShioya}, with the $z$-independent tangential extension as
identification map. The normalisation $\int_{-1/2}^{1/2}\det(I-\eps zS)\,dz
=1+\tfrac{\eps^2}{12}K\to1$ (the linear-in-$z$ term integrating to zero, the
symmetric-shell cancellation) makes $\Hh_\eps\to\Hh_0$.

\begin{theorem}[Mosco convergence]\label{thm:mosco}
As $\eps\to0$, for both the deformation pair and the Hodge pair:
\emph{(M1)} if $U_\eps\rightharpoonup u$ weakly then
$\liminf_\eps Q_\eps(U_\eps)\ge Q_0(u)$; and \emph{(M2)} for every solenoidal
surface field $v$ there is a solenoidal recovery sequence $\widetilde U_\eps\to v$
strongly with $Q_\eps(\widetilde U_\eps)\to Q_0(v)$, at the sharp rate
$Q_\eps(\widetilde U_\eps)=Q_0(v)+O(\eps^2)$ for smooth $v$.
\end{theorem}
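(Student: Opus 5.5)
We split the proof into the liminf inequality (M1) and the recovery construction (M2). Throughout we use the Kuwae--Shioya identification by $z$-independent tangential extension. The normalisation $\int\det(I-\eps zS)\,dz=1+O(\eps^2)$ makes $L^2$ norms on $\Hh_\eps$ and $\Hh_0$ agree up to $O(\eps^2)$.

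For \textbf{(M1)}, let $U_\eps\rightharpoonup u$ with $\sup_\eps Q_\eps(U_\eps)<\infty$; otherwise there is nothing to prove.
\begin{itemize}
\item[\emph{Compactness.}] Theorem~\ref{thm:korn} (deformation pair) or Lemma~\ref{lem:gaffney} (Hodge pair) gives $\norm{U_\eps}_{1,\eps}$ bounded. Hence $\norm{\partial_z U_\eps}=O(\eps)$, so $U_\eps$ is asymptotically $z$-independent.
\item[\emph{Limit field.}] Along a subsequence $U_\eps\to u$ strongly in $L^2$ by Rellich on the fixed cylinder. The limit is $z$-independent, lies in $H^1$, and has vanishing normal component, since $u_3$ vanishes on the walls and is $z$-independent.
\item[\emph{Fluctuations.}] Set $W^t_\eps=\eps^{-1}\partial_z U^t_\eps$ and $W^3_\eps=\eps^{-2}\partial_z u_{3,\eps}$. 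The fast form \eqref{eq:fastform} and the $\eps^{-2}A$ scaling bound $W^t_\eps$ in $L^2$, so it converges weakly to some $\partial_z w_t$. The normal part is handled either by the same bound or by noting that it enters only through nonnegative squares.
\item[\emph{Lower bound.}] Expand the exact densities as in Theorem~\ref{thm:ident}. They equal, up to terms tending to zero, the perfect-square expressions \eqref{eq:slipdensity}--\eqref{eq:hodgedensity} evaluated on $(U_\eps,W_\eps)$, plus nonnegative normal squares. Each term is a convex quadratic in weakly converging quantities; the tangential derivatives converge weakly because of the $H^1$ bound. Weak lower semicontinuity then gives $\liminf Q_\eps\ge\int\rho[2|\Def_g u|^2+(\partial_zw_t\pm Su)^2]\ge Q_0(u)$, the last step by dropping the squares.
\item[\emph{Solenoidality.}] Pass to the limit in $\Div_{G_\eps}U_\eps=0$ weakly and average in $z$. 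Impermeability removes the normal-derivative term, which yields $\Divg u=0$.
\end{itemize}

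For \textbf{(M2)}, take $v$ smooth and solenoidal first. We use the expansion of Section~\ref{sec:formal} with $\alpha=0$ (slip) or $\alpha=1$ (Hodge): $U^t=v+\eps z\,2\alpha Sv+\eps^2z^2\,\alpha(1+2\alpha)S^2v$, together with the normal corrector \eqref{eq:corrector}, specialised as in \eqref{eq:corrector-rev}.
\begin{itemize}
\item[\emph{Fluctuation choice.}] The optimal fluctuation is $\partial_zw_t=\mp Sv$, equivalently $w_t=\mp zSv$. The sign must match the tangential linear term: the Hodge sign gives $+Sv$ per unit $r$, consistent with Proposition~\ref{prop:profile}. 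This choice makes the squares in Theorem~\ref{thm:ident} vanish.
\item[\emph{Rate and divergence.}] The energy is then $Q_0(v)+O(\eps^2)$, the $O(\eps)$ term vanishing by $z\mapsto -z$ symmetry. The divergence defect \eqref{eq:defect} is cancelled at order $\eps$ by the corrector.
\item[\emph{Exact solenoidality.}] The remaining $O(\eps^2)$ divergence is removed by a Bogovskii-type or Leray projection on the shell. Its $\Def$-energy cost is controlled by $\norm{\Div}$ in a uniform way. I would obtain this from the inf-sup/Korn structure in the $\norm{\cdot}_{1,\eps}$ norm, or more simply by solving $\partial_z$ for a further normal correction vanishing at the walls. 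The latter works whenever the $z$-average of the defect is zero, and that average is exactly $\Divg$ of a tangential quantity, which can be absorbed tangentially.
\item[\emph{General $v$.}] Density of smooth solenoidal fields in the solenoidal $H^1$ space of $M$, plus a diagonal argument, extends (M2) to all $v$.
\end{itemize}

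The consequences follow from Kuwae--Shioya: Mosco convergence gives strong resolvent convergence, and from that semigroup convergence. Spectral convergence needs asymptotic compactness, which the uniform Korn and Gaffney bounds together with Rellich supply.

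I expect the main obstacle to be (M1): showing that the exact shell densities are bounded below by the perfect-square limits for \emph{arbitrary} finite-energy sequences, not merely for the ansatz. The cross terms between tangential derivatives and curvature-weighted $W_\eps$ must be shown to converge weakly-times-strongly. I would handle this by writing the physical strain and curl components exactly in the Lam\'e coefficients, keeping every $\eps$-dependent coefficient as a bounded multiplier converging uniformly, and treating each product as a (strong, weak) pair.
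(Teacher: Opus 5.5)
Your proposal is correct and follows the paper's proof essentially step by step. For (M1) it uses Korn/Gaffney compactness and weak lower semicontinuity of the exact quadratic densities, with the fluctuation squares of Theorem~\ref{thm:ident} discarded. For (M2) it uses the two-wall profile with the normal corrector, removes the residual divergence as in your second option (a normal correction for the Jacobian-weighted $z$-oscillating part plus a tangential right inverse of $\Divg$ for the $z$-average, iterated or solved exactly), and finishes with a core/diagonal argument.

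Two details should be stated explicitly. First, the profile $v+2\alpha\eps zSv+\dots$ is in coordinate components, and only the Lam\'e factors $h_i=1-\eps z\kappa_i$ turn it into the physical fluctuation $\mp zSv$ that closes the squares; read physically it would leave an extra $\int\rho\,|Sv|^2$ in the limit. Second, in (M1) the bounded normal quantity is $\eps^{-1}\partial_z u_3$ from $e_{33}$, not $\eps^{-2}\partial_z u_3$, and $\nabla'u_3$ must stay inside the $e_{13},e_{23}$ (resp.\ curl) squares, where it tends to zero only weakly.
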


The liminf half (M1) uses Theorem~\ref{thm:korn} (respectively
Lemma~\ref{lem:gaffney}) for compactness and the exact strain components for
lower semicontinuity, extracting the slow field as the weak limit and discarding
the nonnegative fluctuation squares of Theorem~\ref{thm:ident}. The recovery half
(M2) is the matched expansion of Section~\ref{sec:formal}, made admissible by the
per-condition normal corrector of \S\ref{sub:corrector} (which cancels the
order-$\eps$ divergence defect on non-umbilic meridians, the Hodge corrector
carrying the strain term $-(\kappa_1-\kappa_2)\partial_s v_1$ that is absent for
slip) and by a right inverse of the divergence, bounded in the $\eps$-weighted
norm uniformly in $\eps$ on smooth fields with vanishing normal wall trace. Since
the residual of the corrected ansatz is smooth and mean-zero and smooth solenoidal
fields form a core of the limit form, this is all the recovery requires.

\begin{corollary}[Resolvent, semigroup, and spectral convergence]\label{cor:consequences}
Let $\mathcal L_\eps$, $\mathcal L_0$ be the nonnegative self-adjoint operators of
the forms on the solenoidal subspaces. Then $(\mathcal L_\eps+\lambda)^{-1}\to
(\mathcal L_0+\lambda)^{-1}$ strongly for $\lambda>0$, and $e^{-t\mathcal L_\eps}
\to e^{-t\mathcal L_0}$ uniformly on compact time intervals. Restricted to a fixed
azimuthal mode, bounded sequences are precompact and the eigenvalues converge
with multiplicity, $\lambda_k^{(m)}(\eps)\to\lambda_k^{(m)}(0)$, to those of the
corresponding surface operator. Moreover, since $\norm{U}_{1,\eps}^2$ contains
$\norm{\rho^{-1}\partial_\varphi U}^2=m^2\norm{\rho^{-1}U}^2\ge(m^2/\rho_{\max}^2)
\norm{U}^2$ on the $m$-th mode, Theorem~\ref{thm:korn} (respectively
Lemma~\ref{lem:gaffney}) gives the uniform mode gap
\[
Q_\eps(U)\ \ge\ \Bigl(\tfrac{c\,m^2}{\rho_{\max}^2}-C\Bigr)\norm{U}^2
\qquad\text{on mode }m,\quad \eps\le\eps_0,
\]
so for every $\Lambda>0$ the spectrum of $\mathcal L_\eps$ below $\Lambda$ lies in
the modes $|m|\le M(\Lambda):=\rho_{\max}\sqrt{(\Lambda+C)/c}$, uniformly in
$\eps$, and likewise at $\eps=0$. The counting functions therefore sum over a
finite, $\eps$-independent range, $N_\eps(\Lambda)=\sum_{|m|\le M(\Lambda)}
N^{(m)}_\eps(\Lambda)\to N_0(\Lambda)$ at every continuity point of $N_0$, so the
full spectrum converges with multiplicity and there is no pollution from high
modes.
\end{corollary}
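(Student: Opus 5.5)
The plan is to derive the corollary from Theorem~\ref{thm:mosco} through the Kuwae--Shioya theory for varying Hilbert spaces~\cite{KuwaeShioya}, and to supply the compactness for the spectral part from Theorem~\ref{thm:korn} and Lemma~\ref{lem:gaffney}. Each form $Q_\eps$ is closed and nonnegative on its solenoidal subspace of $\Hh_\eps$, so it defines a nonnegative self-adjoint $\mathcal L_\eps$. In the Kuwae--Shioya framework, Mosco convergence (M1)--(M2) is equivalent to strong convergence of resolvents $(\mathcal L_\eps+\lambda)^{-1}\to(\mathcal L_0+\lambda)^{-1}$ for $\lambda>0$, in the sense of the identification maps. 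Two inputs are needed for this.

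\emph{Identification maps.} I have to check that the $z$-independent tangential extension is asymptotically isometric. This follows from $\int\det(I-\eps zS)\,dz=1+\tfrac{\eps^2}{12}K$.

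\emph{Solenoidal constraint.} The recovery sequences of (M2) are solenoidal, and (M1) is applied to weak limits in the solenoidal subspaces. Since $\{\Div_{G_\eps}U=0\}$ is weakly closed and its limit contains the weak limits, the constrained forms still Mosco-converge.

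Semigroup convergence then follows by the standard route: strong resolvent convergence together with uniform contractivity gives $e^{-t\mathcal L_\eps}\to e^{-t\mathcal L_0}$ uniformly for $t$ in compacts. I would get this through the Trotter--Kato theorem in its varying-space form, approximating $e^{-t\cdot}$ uniformly on $[0,\infty)$ by polynomials in $(\cdot+\lambda)^{-1}$, which is possible by Stone--Weierstrass.

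\emph{Spectral convergence.} Axial symmetry means every $Q_\eps$ commutes with rotations, so the forms split over azimuthal modes $m$, and Mosco convergence holds mode by mode: the recovery sequence of a mode-$m$ field $v$ stays in mode $m$. For compact (rather than merely strong) resolvent convergence, Kuwae--Shioya require asymptotic compactness: every sequence with $\sup_\eps(Q_\eps(U_\eps)+\norm{U_\eps}^2)<\infty$ has a strongly convergent subsequence.
\begin{itemize}
\item Within a fixed mode, Theorem~\ref{thm:korn} (respectively Lemma~\ref{lem:gaffney}) gives a uniform bound on $\norm{U_\eps}_{1,\eps}$. The $\eps^{-2}\norm{\partial_zU}^2$ term forces $\norm{U_\eps-\bar U_\eps}\to0$, where $\bar U_\eps$ is the $z$-average.
\item The averages are bounded in $H^1$ of the one-dimensional meridian, so Rellich gives a strongly convergent subsequence.
\item The normal component of the limit vanishes by impermeability: $u_3=0$ on the walls and $\partial_zu_3=O(\eps)$.
\end{itemize}
Compact convergence of resolvents then yields eigenvalue convergence with multiplicity, $\lambda_k^{(m)}(\eps)\to\lambda_k^{(m)}(0)$, and convergence of the spectral projections.

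\emph{Global counting function.} On mode $m$, $\norm{\rho^{-1}\partial_\varphi U}^2=m^2\norm{\rho^{-1}U}^2$. Inserting this into the inequality of Theorem~\ref{thm:korn} gives the mode gap. Hence any eigenvector with eigenvalue below $\Lambda$ must have $|m|\le M(\Lambda)$, uniformly in $\eps\le\eps_0$, and the same holds at $\eps=0$ by the limiting Korn/Gaffney bound or by passing to the limit. Summing the finitely many mode counting functions gives $N_\eps(\Lambda)\to N_0(\Lambda)$ at continuity points. This excludes spectral pollution from high modes, so the whole spectrum converges with multiplicity.

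I expect the main obstacle to be asymptotic compactness, specifically showing that the $\eps$-dependent spaces and measures $d\mu_\eps$ together with the identification map still give strong convergence in the Kuwae--Shioya sense once the normal fluctuation is discarded. I would handle this by comparing $d\mu_\eps$ with $dV_g\,dz$ through $h_1h_2/\rho=1+O(\eps)$.
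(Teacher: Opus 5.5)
Your proposal is correct and follows the paper's own route. Theorem~\ref{thm:mosco}, read through the Kuwae--Shioya equivalence, gives resolvent and semigroup convergence; the uniform bound on $\norm{U}_{1,\eps}$ from Theorem~\ref{thm:korn} (respectively Lemma~\ref{lem:gaffney}) gives mode-wise asymptotic compactness and eigenvalue convergence with multiplicity; and the $m^2$ term in $\norm{U}_{1,\eps}^2$ gives the uniform mode gap and the finite, $\eps$-independent sum of counting functions. The paper asserts these consequences without a separate proof, so your $z$-average/Rellich compactness step and the varying-space Trotter--Kato theorem (which, unlike Stone--Weierstrass alone, is what gives uniformity down to $t=0$) are precisely the details it leaves implicit.
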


\subsection{The nonlinear equation}\label{sub:nonlinear}

The results above are at the linear (Stokes) level, which is the substance:
Mosco convergence of the viscous forms gives resolvent and semigroup convergence
of the linear generators. The convergence of the nonlinear Navier-Stokes flow on
the shell to the surface flow then follows by the standard uniform energy
estimate and Aubin-Lions compactness, as in Miura~\cite{MiuraII,MiuraIII} and Temam-Ziane
\cite{TZ97}; for the slip condition it is Miura's theorem, and for the Hodge
condition the same argument applies with the Gaffney coercivity of
Lemma~\ref{lem:gaffney} in place of Korn. We state the linear results as the
theorems and indicate the nonlinear passage; the details are those of the cited
works.

\section{Discussion}\label{sec:discussion}

\subsection{The universality mechanism}\label{sub:universality}

Both endpoints of the selection rest on a single identity, the Gauss equation.
The ambient Bochner Laplacian restricted to tangential fields produces
zero-order terms in the first and second powers of the shape operator, $nH\,S$
and $S^2$, both extrinsic; but their combination $nH\,S-S^2$ is exactly the
intrinsic Ricci tensor, on every smooth hypersurface and with no curvature
assumption. The wall condition then fixes the radial term $\Frad$, itself built
from $nH\,S$ and $S^2$: for the stress-free condition $\Frad=0$ and the operator
is the deformation Laplacian, while for the vorticity-free condition
$\Frad=-2\Ric$, again intrinsic by a second use of the Gauss equation, and the
operator shifts to the Hodge Laplacian. For intermediate conditions the Gauss
equation absorbs all but the $S^2$ remainder of the family. The dichotomy is thus
universal precisely because the extrinsic geometry enters only through
combinations that the Gauss equation returns to the intrinsic curvature; the sole
exception is the intermediate regime, where a genuinely extrinsic $S^2$ term
survives.

\subsection{Relationship to previous work}\label{sub:relation}

The two endpoints are known, on their respective domains, from earlier work, and
our results place them within one structure. Miura~\cite{MiuraIII} established the
stress-free limit rigorously on general closed surfaces, the deformation
Laplacian $\LapB+\Ric$, with friction and weighting in general and the clean
deformation Laplacian in the clean case; we recover it as the endpoint
$\alpha=0$ and do not re-prove it. He already observed, in a
remark~\cite[Rem.~2.10]{MiuraIII}, that the stress-free and vorticity-free
conditions differ on the sphere by twice the Weingarten map; our contribution is
to make that a theorem on general and revolution surfaces, with a family between
the endpoints, rather than a comparison of two fixed cases. Temam and
Ziane~\cite{TZ97} derived the vorticity-free limit on the sphere; their boundary
condition is $(\Curl u)\times n=0$, which they identify with the stress-free
condition,\footnote{Their equations are supplemented with ``the free boundary
conditions'', of which the second ``means that the tangential component of the
stress tensor applied to the normal $[\dots]$ vanishes''~\cite[eq.~(0.4)]{TZ97}; the identity holds on a flat wall but fails
on the sphere.} an identification valid on flat walls but not curved ones, and
their limit operator, recorded as the Laplace-Beltrami operator, is the Hodge
Laplacian. On a flat layer the two conditions do coincide, which is why the flat
thin-domain analysis of~\cite{TZ96} did not separate them. The vorticity-free
limit beyond the sphere, and its clean separation from the stress-free limit, are
what is new here.

The coercivity these limits require, a uniform-in-thickness Korn inequality on
the shell, is due to Lewicka and M\"uller~\cite{LM11} for general closed
hypersurfaces and to Miura~\cite{MiuraI} in the Navier setting, both with the
Killing-field obstruction that the rotation field of a surface of revolution
exhibits; we cite these for the general estimate and prove it directly only in
the revolution case. The Hodge form, by contrast, needs not Korn but a Gaffney
inequality, which does not appear in that work and which we establish through the
two-wall cancellation of the boundary curvature terms.

The extension-dependence found on the ellipsoid is explained by the same
mechanism. Chan and Czubak~\cite{CC24} gave a Gauss formula for the Laplacian on
hypersurfaces with the extension left free, and Chan, Czubak and
Yoneda~\cite{CCY23} found that the projected operator on the ellipsoid depends on
that extension; different extensions are different boundary conditions, and the
two fluid-mechanical choices give the two intrinsic Laplacians on any surface.
Chan, Czubak and Fuster Aguilera~\cite{CCF25} made this concrete: expanding along
the scaling direction they obtained several operators depending on the averaging
method, while their normal-direction expansions gave the Hodge Laplacian under
the Hodge condition and the deformation Laplacian under the Navier condition.
Their normal-direction relations convert exactly into the two-wall profile of
Section~\ref{sec:formal}: with coefficient fields extended by parallel transport a parallel field
satisfies $\partial_\sigma E^i=S^i{}_j(\sigma)E^j$, so their Hodge relations
$U_1=SU_0$, $U_2=S^2U_0$ reproduce the vorticity-free profile and their Navier
relations $U_1=-SU_0$, $U_2=0$ the stress-free profile. The additional operators
they find in the scaling direction reflect the sensitivity of the limit to how
the wall condition is carried across the shell, the sensitivity quantified for
our family in Section~\ref{sec:formal}. In a companion paper~\cite{WB2026} the selection of the
deformation Laplacian was obtained on the sphere from intrinsic kinematics; the
present thin-shell results reach the same operator from the other direction, as
the outcome of the wall condition that respects those kinematics.

\subsection{The interpolating family and partial-slip membranes}\label{sub:membranes}

The intermediate operator $\Delta_\alpha=\LapDef-2\alpha\Ric-4\alpha(1-\alpha)S^2$
has a physical reading. A real membrane, biological or industrial, is neither
perfectly stress-free nor perfectly vorticity-free; its wall condition lies
between the two ideals, and $\alpha$ measures the degree to which it constrains
the tangential vorticity. The interpolation is geometric rather than frictional:
the family joins the two invariant conditions and is distinct from the Navier
friction family, whose thin-film limit brings in the friction coefficient at a
different order~\cite{MiuraIII}. For intermediate $\alpha$ the effective operator
depends on the extrinsic geometry through $S^2$, with weight $4\alpha(1-\alpha)$,
maximal at the half-slip point $\alpha=\tfrac12$. On a constant-curvature surface
$S^2$ is a scalar and acts as a zero-order potential; on a surface with distinct
principal curvatures it is a non-trivial endomorphism, so a fluid in partial slip
on an ellipsoid would feel a direction-dependent effective viscosity, dissipating
differently along the directions of greatest and least curvature. This anisotropy
is present for intermediate conditions and absent at both endpoints.

\subsection{What is established, and what remains}\label{sub:outlook}

Czubak~\cite{Czubak2024} surveyed the question of which Laplacian governs viscous
flow on a Riemannian manifold and concluded that the answer depends on the
physical problem. For the thin-shell setting these results make that precise: the
operator is selected by the wall condition, the two natural conditions give the
two standard intrinsic Laplacians, and intermediate conditions give a
one-parameter family with an extrinsic correction. The selection is established
here in two forms, formally on any hypersurface and rigorously, as Mosco
convergence with its resolvent, semigroup and spectral consequences, on surfaces
of revolution, the latter including the vorticity-free (Hodge) case that had been
treated only on the sphere.

What remains is the general-surface rigorous limit for the vorticity-free and
interpolating conditions. This is within reach rather than open-ended: the one
hard ingredient, the uniform Korn inequality, is already available on general
closed hypersurfaces~\cite{LM11,MiuraI}, so the route of Section~\ref{sec:rigorous},
uniform coercivity, exact identification of the reduced forms, and Mosco
convergence, should extend, with surfaces of revolution done here as the first
complete case. At the nonlinear level the passage from the linear limit to
convergence of the Navier-Stokes flow follows the energy-and-compactness argument
of~\cite{MiuraIII,TZ97}, which for the stress-free case is Miura's theorem and for
the vorticity-free case uses the Gaffney coercivity established here in place of
Korn. Completing the general-surface Hodge convergence, and the mode-summed
spectral statement, are the natural next steps.

\appendix

\section{Analytic proofs}\label{app:proofs}

Throughout this appendix $M$ is the torus-type surface of revolution of
Section~\ref{sub:revsetup}, with unit-speed meridian $s\mapsto(\rho(s),\zeta(s))$,
$\rho\ge\rho_{\min}>0$, principal curvatures $\kappa_1,\kappa_2$, and
$\eps_0=\bigl(4(1+\norm{\kappa_1}_\infty+\norm{\kappa_2}_\infty)\bigr)^{-1}$, so
that on the rescaled cylinder $\Sigma=M\times(-\tfrac12,\tfrac12)$ the Lam\'e
coefficients
\[
h_1=1-\eps z\,\kappa_1,\qquad h_2=\rho\,(1-\eps z\,\kappa_2),\qquad h_3=\eps
\]
satisfy $h_1,\,h_2/\rho\in[\tfrac34,\tfrac54]$ for all $\eps\le\eps_0$. All
unlabelled norms are $L^{2}(\Sigma,ds\,d\varphi\,dz)$ norms; since
$h_1h_2\in[\tfrac9{16}\rho_{\min},\tfrac{25}{16}\norm\rho_\infty]$, they are
uniformly equivalent to the $L^{2}(d\mu_\eps)$ norms, and we pass between the
two without further comment. Generic constants $c,C$ depend only on
$\rho_{\min}$, $\norm{\rho}_{C^{2}}$, $\norm{\kappa_1}_{C^{1}}$,
$\norm{\kappa_2}_{C^{1}}$ and the meridian length $\ell$, and may change from
line to line.

\subsection{Exact strain components and elementary consequences}
\label{app:strain}

For $U=(u_1,u_2,u_3)$ in physical components on $\Sigma$, the physical strain
components of $E(U)=\Def_{G_\eps}U$ in the orthogonal frame are, exactly,
\begin{equation}\label{eq:app-strain}
\begin{aligned}
e_{11}&=\frac{\partial_s u_1}{h_1}-\frac{\kappa_1 u_3}{h_1}, &
e_{22}&=\frac{\partial_\varphi u_2}{h_2}+\frac{(\partial_s h_2)\,u_1}{h_1h_2}
-\frac{\rho\kappa_2\,u_3}{h_2}, &
e_{33}&=\frac{\partial_z u_3}{\eps},\\[2pt]
2e_{13}&=\frac{\partial_z u_1}{\eps}+\frac{\kappa_1 u_1}{h_1}
+\frac{\partial_s u_3}{h_1}, &
2e_{23}&=\frac{\partial_z u_2}{\eps}+\frac{\rho\kappa_2 u_2}{h_2}
+\frac{\partial_\varphi u_3}{h_2}, &
2e_{12}&=\frac{\partial_s u_2}{h_1}-\frac{(\partial_s h_2)\,u_2}{h_1h_2}
+\frac{\partial_\varphi u_1}{h_2},
\end{aligned}
\end{equation}
as certified symbolically (check C1). Two consequences are immediate for
$\eps\le\eps_0$ and $u_3=0$ on $z=\pm\tfrac12$. First, from $e_{33}$ and the
one-dimensional Poincar\'e inequality in $z$ on $(-\tfrac12,\tfrac12)$ with
zero endpoint values,
\begin{equation}\label{eq:app-u3}
\eps^{-2}\norm{\partial_z u_3}^{2}\le\norm{E(U)}^{2},
\qquad
\norm{u_3}\le\tfrac1\pi\,\norm{\partial_z u_3}\le\tfrac{\eps}{\pi}\norm{E(U)}.
\end{equation}
Second, from $e_{13},e_{23}$,
\begin{equation}\label{eq:app-normal}
\eps^{-1}\norm{\partial_z u_t}\ \le\ 2\norm{E(U)}
+C\bigl(\norm{u_t}+\norm{\nabla' u_3}\bigr),
\qquad u_t=(u_1,u_2),\quad
\nabla'=( \partial_s,\rho^{-1}\partial_\varphi).
\end{equation}
The tangential gradient of $u_3$ on the right of \eqref{eq:app-normal} is not
yet controlled; producing it is the purpose of the swap lemma below.

\subsection{Proof of the uniform Korn inequality (Theorem~\ref{thm:korn})}
\label{app:korn}

\begin{lemma}[Swap]\label{lem:app-swap}
For $\eps\le\eps_0$ and $U\in H^{2}(\Sigma)^{3}$ with $u_3=0$ on
$z=\pm\tfrac12$,
\begin{equation}\label{eq:app-swaplem}
\norm{\nabla' u_3}^{2}\ \le\ C\,\norm{E(U)}\,
\bigl(\norm{E(U)}+\norm{\nabla' u_t}+\norm{U}\bigr)+C\norm{U}^{2}.
\end{equation}
By density the estimate extends to all $U\in H^{1}$ with the stated trace.
\end{lemma}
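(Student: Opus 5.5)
The plan is to express the tangential derivatives of $u_3$ through the shear strains $e_{13},e_{23}$ of \eqref{eq:app-strain}. The singular piece $\eps^{-1}\partial_z u_t$ is then moved by two integrations by parts, first in $z$ and then tangentially, onto $\eps^{-1}\partial_z u_3=e_{33}$, which is already controlled by \eqref{eq:app-u3}. Everything is done first for $U\in H^2(\Sigma)^3$, so that mixed derivatives make sense, and extended by density at the end. The boundary structure is used as follows. Since $u_3=0$ on $z=\pm\tfrac12$, also $\partial_s u_3=\partial_\varphi u_3=0$ there. The surface is torus-type, so the $s$- and $\varphi$-directions are periodic and tangential integrations by parts produce no boundary terms.

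\emph{Meridian derivative.} Solve the $e_{13}$ identity for $\partial_s u_3$:
\[
\partial_s u_3=2h_1e_{13}-\eps^{-1}h_1\partial_z u_1-\kappa_1u_1 .
\]
Pair this with $\partial_s u_3$ in $L^2(ds\,d\varphi\,dz)$ to get
\[
\norm{\partial_s u_3}^2=\int 2h_1e_{13}\,\partial_s u_3-\int\kappa_1u_1\,\partial_s u_3-\eps^{-1}\int h_1\,\partial_z u_1\,\partial_s u_3 .
\]
The first two terms are bounded by $C\norm{\partial_s u_3}(\norm{E(U)}+\norm{U})$. In the last term, integrate by parts in $z$; the wall term vanishes because $\partial_s u_3=0$ on the walls. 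Since $\partial_z h_1=-\eps\kappa_1$, this produces
\[
\eps^{-1}\int h_1\,u_1\,\partial_z\partial_s u_3\ -\ \int\kappa_1u_1\,\partial_s u_3 .
\]
The second of these is again $O(\norm{U}\norm{\partial_s u_3})$. In the first, integrate by parts in $s$ to obtain $-\int\partial_s(h_1u_1)\,\eps^{-1}\partial_z u_3$. Because $\eps^{-1}\partial_z u_3=e_{33}$, this is bounded by $C\norm{E(U)}\bigl(\norm{\partial_s u_1}+\norm{u_1}\bigr)$, using $|\partial_s h_1|\le\eps\norm{\kappa_1}_{C^1}$.

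\emph{Azimuthal derivative.} The same argument applies to $\rho^{-1}\partial_\varphi u_3$. From $e_{23}$,
\[
\rho^{-1}\partial_\varphi u_3=(h_2/\rho)\,2e_{23}-\eps^{-1}(h_2/\rho)\,\partial_z u_2-\kappa_2u_2 .
\]
The $z$-integration by parts again costs only $\partial_z(h_2/\rho)=-\eps\kappa_2$, hence an $O(1)$ term after dividing by $\eps$. The $\varphi$-integration by parts is simpler than the meridian one, since $h_2/\rho$ and $\rho$ do not depend on $\varphi$. It gives $C\norm{E(U)}\norm{\rho^{-1}\partial_\varphi u_2}$.

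\emph{Conclusion.} Add the two estimates and apply Young's inequality to the terms $\norm{\nabla'u_3}\bigl(\norm{E(U)}+\norm{U}\bigr)$, absorbing half of $\norm{\nabla'u_3}^2$ on the left. What remains is
\[
\norm{\nabla'u_3}^2\le C\norm{E(U)}^2+C\norm{E(U)}\bigl(\norm{\nabla'u_t}+\norm{U}\bigr)+C\norm{U}^2,
\]
which is \eqref{eq:app-swaplem}. The constants involve only $\norm{\kappa_i}_{C^1}$, $\rho_{\min}$, $\norm{\rho}_{C^1}$ and the bounds $h_1,h_2/\rho\in[\tfrac34,\tfrac54]$.

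\emph{Density.} Both sides of \eqref{eq:app-swaplem} are continuous in $H^1$. $H^2$ fields with $u_3=0$ on the walls are dense in the corresponding $H^1$ space (mollify tangentially, and in $z$ use an odd reflection of $u_3$ across the walls). This extends the estimate to all admissible $U\in H^1$.

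The main obstacle is the bookkeeping of the weights. Each commutator created by moving $\partial_z$ or $\partial_s$ across $h_1$ or $h_2/\rho$ must carry a factor $\eps$ that cancels the $\eps^{-1}$. This holds because $\partial_z h_i$ and $\partial_s h_1$ are $O(\eps)$. The one exception is the $\partial_s\rho$ factor in $\partial_s h_2$; I avoid it by working with $\rho^{-1}\partial_\varphi$ and the ratio $h_2/\rho$, as above.
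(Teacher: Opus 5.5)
Your proposal is correct and follows essentially the same route as the paper: solve the shear component $e_{13}$ (resp.\ $e_{23}$) for the tangential derivative of $u_3$, integrate the singular $\eps^{-1}\partial_z u_t$ term by parts in $z$ (using $\nabla' u_3=0$ on the walls) and then tangentially so that it lands on $e_{33}=\eps^{-1}\partial_z u_3$, and close with Young's inequality and density. Your $\rho^{-1}$ normalisation in the azimuthal step is only cosmetic, since no $\partial_s h_2$ term arises in the paper's substitution $(s,h_1,\kappa_1)\to(\varphi,h_2,\rho\kappa_2)$ either; and your identity $-\eps^{-1}\int h_1\partial_z u_1\,\partial_s u_3=-\int\kappa_1u_1\,\partial_s u_3-\eps^{-1}\int\partial_s(h_1u_1)\,\partial_z u_3$ is the correct one, whereas the paper's displayed version has a sign slip and a spurious extra term, neither of which affects the resulting bound.
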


\begin{proof}
We estimate $\norm{\partial_s u_3}$; the azimuthal derivative is identical with
$(s,h_1,\kappa_1)$ replaced by $(\varphi,h_2,\rho\kappa_2)$ and $\partial_s$
by $\partial_\varphi$, all integrations by parts in $\varphi$ being boundary-free
on the circle. From \eqref{eq:app-strain},
$\partial_s u_3=2h_1e_{13}-\eps^{-1}h_1\partial_z u_1-\kappa_1u_1$, so
\[
\norm{\partial_s u_3}^{2}
=\int \partial_s u_3\,\bigl(2h_1e_{13}\bigr)
-\eps^{-1}\!\int h_1\,\partial_s u_3\,\partial_z u_1
-\int \kappa_1\,\partial_s u_3\,u_1 .
\]
The first and third terms are bounded by
$\norm{\partial_s u_3}\bigl(C\norm{E}+C\norm{U}\bigr)$. The middle term is the
swap: integrating by parts first in $z$ (the boundary term vanishes because
$u_3=0$ on the walls forces $\partial_s u_3=0$ there), then in $s$ (closed
meridian), and once more in $z$ where convenient,
\[
-\eps^{-1}\!\int h_1\,\partial_s u_3\,\partial_z u_1
=\eps^{-1}\!\int \partial_z u_3\;\partial_s\bigl(h_1 u_1\bigr)
+\eps^{-1}\!\int (\partial_z h_1)\,\partial_s u_3\; u_1
-\eps^{-1}\!\int(\partial_s h_1)\,\partial_z u_3\;u_1 .
\]
Since $\partial_z h_1=-\eps\kappa_1$ and $\partial_s h_1=-\eps z\kappa_1'$,
the last two integrals are bounded by
$C\norm{\partial_s u_3}\norm{U}+C\norm{\partial_z u_3}\,\eps^{-1}\eps\norm{U}$,
and by \eqref{eq:app-u3} the first is bounded by
$\norm{\eps^{-1}\partial_z u_3}\,\norm{\partial_s(h_1u_1)}
\le\norm{E}\,\bigl(C\norm{\nabla'u_t}+C\norm{U}\bigr)$. Young's inequality
absorbs every occurrence of $\norm{\partial_s u_3}$ on the right and yields
\eqref{eq:app-swaplem}.
\end{proof}

\begin{lemma}[Slice Korn]\label{lem:app-slice}
Let $g_z=h_1^{2}\,ds^{2}+h_2^{2}\,d\varphi^{2}$ be the metric of the slice
$M\times\{z\}$, $v$ a tangential field on the slice, and $K_z$ the Gauss
curvature of $g_z$. Then
\begin{equation}\label{eq:app-slicekorn}
\int_{M}|\nabla_{g_z} v|^{2}\,dV_{g_z}
=2\int_{M}|\Def_{g_z}v|^{2}\,dV_{g_z}
-\int_{M}(\Div_{g_z} v)^{2}\,dV_{g_z}
-\int_{M}K_z|v|^{2}\,dV_{g_z},
\end{equation}
and $\sup_{\eps\le\eps_0,\,|z|\le1/2}\norm{K_z}_\infty\le C$.
\end{lemma}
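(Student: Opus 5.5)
The identity \eqref{eq:app-slicekorn} is a purely intrinsic statement on the closed two-dimensional Riemannian manifold $(M,g_z)$, and I would prove it by the standard Bochner--Weitzenb\"ock integration by parts. Write $\nabla=\nabla_{g_z}$ and split $\nabla v=\Def_{g_z}v+A$, with $A$ the antisymmetric part. Pointwise $|\nabla v|^2=|\Def v|^2+|A|^2$ and $2|\Def v|^2=|\nabla v|^2+\langle\nabla v,\nabla v^{\mathsf T}\rangle$. Hence it suffices to show
\[
\int_M \nabla_i v^j\,\nabla_j v^i\,dV_{g_z}=\int_M(\Div_{g_z}v)^2\,dV_{g_z}-\int_M K_z|v|^2\,dV_{g_z}
\]
up to the sign convention. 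To get this, integrate by parts twice on the closed surface: no boundary terms arise, since the meridian is a closed curve and $\varphi$ is periodic. Then commute the covariant derivatives, using $\nabla_i\nabla_j v^i-\nabla_j\nabla_i v^i=\Ric_{jl}v^l=K_z v_j$ in dimension two. This yields $\int\nabla_iv^j\nabla_jv^i=\int(\Div v)^2-\int K_z|v|^2$ with the conventions of Theorem~\ref{thm:dichotomy}. Substituting gives \eqref{eq:app-slicekorn}; I would track the sign of the curvature term against the check C4 convention, since the sign of $K_z|v|^2$ is the only place an error can enter. For $v\in H^1$ I would argue by density from smooth fields, since every term is continuous in $H^1$ and $K_z$ is bounded.

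For the uniform curvature bound, I would use the explicit Gauss curvature formula for an orthogonal metric $h_1^2ds^2+h_2^2d\varphi^2$ with $h_2$ independent of $\varphi$:
\[
K_z=-\frac{1}{h_1h_2}\,\partial_s\!\Bigl(\frac{\partial_s h_2}{h_1}\Bigr).
\]
Here $h_1=1-\eps z\kappa_1$ and $h_2=\rho(1-\eps z\kappa_2)$, and for $\eps\le\eps_0$ we have $h_1,h_2/\rho\in[\tfrac34,\tfrac54]$, so the denominators are bounded below by $c\rho_{\min}$. The numerator involves $\rho''$, $\rho'$, $\kappa_1'$, $\kappa_2'$ and \emph{a priori} also $\kappa_2''$. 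This potential second derivative of the curvature is the one subtle point, because the stated constants allow only $\norm{\kappa_2}_{C^1}$. I would remove it using $\rho\kappa_2=\zeta'$, so that $h_2=\rho-\eps z\zeta'$ and $\partial_s^2h_2=\rho''-\eps z\zeta'''$. I would then express $\zeta''' $ through $\kappa_1'$ and lower-order quantities via the unit-speed relations $\rho''=-\kappa_1\zeta'$ and $\zeta''=\kappa_1\rho'$, up to orientation sign. This gives $\zeta'''=\kappa_1'\rho'+\kappa_1\rho''$, bounded by $\norm{\kappa_1}_{C^1}$ and $\norm\rho_{C^2}$.

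The term $\partial_s h_1=-\eps z\kappa_1'$ is likewise bounded. Collecting these bounds gives $\norm{K_z}_\infty\le C$ uniformly in $\eps\le\eps_0$ and $|z|\le\tfrac12$. The main obstacle is exactly this regularity bookkeeping in the curvature bound; the integral identity itself is routine.
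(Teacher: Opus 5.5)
Your route is the same as the paper's: the integrated commutation (Weitzenb\"ock) identity on the closed slice, plus the explicit Gauss-curvature formula for $g_z$. Your intermediate formula $\int\nabla_iv^j\nabla_jv^i\,dV_{g_z}=\int(\Div_{g_z}v)^2\,dV_{g_z}-\int K_z|v|^2\,dV_{g_z}$ is correct. The step ``substituting gives \eqref{eq:app-slicekorn}'' is not. Inserting your formula into $2|\Def v|^2=|\nabla v|^2+\langle\nabla v,\nabla v^{\mathsf T}\rangle$ gives
\[
\int_M|\nabla_{g_z}v|^2\,dV_{g_z}=2\int_M|\Def_{g_z}v|^2\,dV_{g_z}-\int_M(\Div_{g_z}v)^2\,dV_{g_z}+\int_M K_z|v|^2\,dV_{g_z},
\]
with $+K_z$, not $-K_z$. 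No sign convention rescues the displayed version, because $K_z$ is the Gauss curvature and $|\nabla v|^2\ge0$; the displayed identity is simply false. Test it on $v=\partial_\varphi$, which is Killing for every $g_z$ since $h_1,h_2$ do not depend on $\varphi$. Then $\Def_{g_z}v=0$ and $\Div_{g_z}v=0$, while $|\nabla_{g_z}v|^2=2(\partial_sh_2)^2/h_1^2$. Integrating by parts in $s$ gives $\int K_z|v|^2\,dV_{g_z}=\int 2h_2(\partial_sh_2)^2h_1^{-1}\,ds\,d\varphi=\int|\nabla_{g_z}v|^2\,dV_{g_z}>0$, since $h_2$ is not constant along the closed meridian. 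The displayed version would therefore equate a positive number with its negative; a rotation field on the round sphere gives the same contradiction. The paper's own recipe, pairing $2\Div\Def v=\LapB v+\nabla\Div v+\Ric(v)$ with $v$, also produces $+K_z$, so the error is in the statement itself. Instead of hedging with ``up to the sign convention'' and claiming to recover \eqref{eq:app-slicekorn}, prove the corrected identity and say explicitly that the displayed sign is wrong.

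Nothing downstream depends on this sign. Step~2 of the Korn proof uses only $|K_z|\le C$ to obtain $\norm{\nabla'u_t}^2\le C\norm{E(U)}^2+C\norm{U}^2$, and that holds with either sign. Your curvature bound is correct and more careful than the paper's one-line justification. A naive expansion of $\partial_s^2h_2$ brings in $(\rho\kappa_2)''$. Your elimination via $\rho\kappa_2=\zeta'$ and $\zeta'''=\kappa_1'\rho'+\kappa_1\rho''$ (up to orientation) is exactly what justifies constants that depend only on $\norm{\kappa_1}_{C^1}$, $\norm{\kappa_2}_{C^1}$ and $\norm{\rho}_{C^2}$.
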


\begin{proof}
On a closed surface, $2\Div\Def v=\LapB v+\nabla\Div v+\Ric(v)$
(the Weitzenb\"ock form of \eqref{eq:opidentity}); pairing with $v$ and
integrating by parts gives \eqref{eq:app-slicekorn}, with $\Ric=K_zg_z$ in two
dimensions. The identity is exact, with no boundary terms, and is the
integrated form of the commutation of two covariant derivatives; it is verified
in the meridian frame by check C4. The curvature bound follows from
$K_z=-\bigl(\partial_s(h_2/h_1)\,\partial_s\bigr)$-type expressions whose
coefficients involve only $\rho,\kappa_1,\kappa_2$ and one derivative, all
uniformly bounded for $\eps\le\eps_0$.
\end{proof}

\begin{proof}[Proof of Theorem~\ref{thm:korn}]
Let $\eps\le\eps_0$ and $U\in H^{1}(\Sigma)^{3}$ with $u_3=0$ on the walls.
\emph{Step 1 (normal derivatives and $u_3$).} Estimates \eqref{eq:app-u3},
\eqref{eq:app-normal} and Lemma~\ref{lem:app-swap} give
\begin{equation}\label{eq:app-step1}
\eps^{-2}\norm{\partial_z U}^{2}+\norm{\nabla' u_3}^{2}
\ \le\ C\norm{E(U)}^{2}
+\tfrac1{100}\norm{\nabla' u_t}^{2}+C\norm{U}^{2},
\end{equation}
after one application of Young's inequality to \eqref{eq:app-swaplem}.

\emph{Step 2 (tangential derivatives of $u_t$).} Fix $z$ and apply
Lemma~\ref{lem:app-slice} to $v=u_t(\cdot,\cdot,z)$. The slice strain
$\Def_{g_z}u_t$ differs from the corresponding components
$(e_{11},e_{22},e_{12})$ of \eqref{eq:app-strain} only by the $u_3$ terms
displayed there, so pointwise
$|\Def_{g_z}u_t|\le C\bigl(|E(U)|+|u_3|\bigr)$. Discarding the nonpositive
$-(\Div_{g_z}v)^{2}$ term, integrating \eqref{eq:app-slicekorn} in $z$, and using
the uniform curvature bound,
\begin{equation}\label{eq:app-step2}
\norm{\nabla' u_t}^{2}\ \le\
C\norm{E(U)}^{2}+C\norm{u_3}^{2}+C\norm{U}^{2}
\ \le\ C\norm{E(U)}^{2}+C\norm{U}^{2},
\end{equation}
the last step by \eqref{eq:app-u3}.

\emph{Step 3 (assembly).} Insert \eqref{eq:app-step2} into
\eqref{eq:app-step1}; adding the two and recalling the definition of
$\norm{U}_{1,\eps}$ gives
$\norm{U}_{1,\eps}^{2}\le C_*\norm{E(U)}^{2}+C_*\norm{U}^{2}$, i.e.\ the
claimed inequality with $c=1/C_*$.

\emph{Mode uniformity.} Every identity used, \eqref{eq:app-strain},
\eqref{eq:app-swaplem} and \eqref{eq:app-slicekorn}, has coefficients
independent of $\varphi$; under the azimuthal Fourier decomposition
$U=\sum_m U^{(m)}e^{im\varphi}$ the three quadratic functionals
$\norm{E(U)}^{2}$, $\norm{U}_{1,\eps}^{2}$, $\norm U^{2}$ split into sums over
$m$ of the corresponding mode functionals, in which $\partial_\varphi$ acts as
$im$ and the integrations by parts in $\varphi$ become algebraic. The argument
above applies verbatim to each mode with the same constants, which proves the
mode-uniform statement.
\end{proof}

\subsection{Proof of the uniform Gaffney inequality (Lemma~\ref{lem:gaffney})}
\label{app:gaffney}

We work in the unscaled physical shell
$\Sigma_\eps=\{p+rN(p):|r|<\eps/2\}\subset\R^{3}$, which is flat, and then
rescale. Write $u$ for the field on $\Sigma_\eps$, $u_t$ for its tangential
part, and $\Gamma_\pm=\{r=\pm\eps/2\}$ for the two walls with outward unit
normals $\pm N$.

\begin{proof}[Proof of Lemma~\ref{lem:gaffney}]
\emph{Step 1 (Friedrichs identity).} For a bounded $C^{1,1}$ domain
$\Omega\subset\R^{3}$ and $u\in H^{1}(\Omega)^{3}$ with $u\cdot\nu=0$ on
$\partial\Omega$,
\begin{equation}\label{eq:app-friedrichs}
\int_\Omega|\nabla u|^{2}
=\int_\Omega\bigl(|\Curl u|^{2}+(\Div u)^{2}\bigr)
-\int_{\partial\Omega}\II_{\partial}(u_t,u_t)\,dA,
\end{equation}
where $\II_\partial$ is the second fundamental form of $\partial\Omega$ with
respect to the outward normal, with the sign convention that
$\II_\partial\ge0$ for convex $\Omega$; see \cite{GiraultRaviart}. Applied to
$\Omega=\Sigma_\eps$, the boundary integral splits over the two walls. With
respect to the outward normals, the wall shape operators are $\mp S(\pm\eps/2)$
where $S(r)=S(\mathrm{Id}-rS)^{-1}$ is the parallel-surface shape operator of
\eqref{eq:riccati}, so
\begin{equation}\label{eq:app-bdry}
-\int_{\partial\Sigma_\eps}\II_\partial(u_t,u_t)\,dA
=\int_{\Gamma_+}\!\bigl\langle S(\tfrac\eps2)u_t,u_t\bigr\rangle\,dA
-\int_{\Gamma_-}\!\bigl\langle S(-\tfrac\eps2)u_t,u_t\bigr\rangle\,dA .
\end{equation}

\emph{Step 2 (two-wall cancellation).} Writing $dA=J(r)\,dV_g$ with
$J(r)=\det(\mathrm{Id}-rS)$ and applying the fundamental theorem of calculus in
$r$, the difference \eqref{eq:app-bdry} becomes a single interior integral,
\begin{equation}\label{eq:app-ftc}
\int_{M}\!\int_{-\eps/2}^{\eps/2}
\partial_r\Bigl[\bigl\langle S(r)\,u_t,u_t\bigr\rangle\,J(r)\Bigr]\,dr\,dV_g .
\end{equation}
Since $\partial_rS(r)=S(r)^{2}$ and $\partial_rJ=-\operatorname{tr}\!\bigl(S(r)\bigr)J$,
the integrand is bounded pointwise by
$C\bigl(|u_t|^{2}+|u_t|\,|\partial_r u_t|\bigr)$ with $C$ depending only on
$\norm{\kappa_i}_\infty$, uniformly for $\eps\le\eps_0$. Hence, by Young's
inequality, for any $\delta>0$,
\[
\Bigl|\int_{\partial\Sigma_\eps}\II_\partial(u_t,u_t)\,dA\Bigr|
\ \le\ \delta\norm{\partial_r u}_{L^{2}(\Sigma_\eps)}^{2}
+C\delta^{-1}\norm{u}_{L^{2}(\Sigma_\eps)}^{2}.
\]
With $\delta=\tfrac12$, \eqref{eq:app-friedrichs} gives
\begin{equation}\label{eq:app-gaffney-flat}
\int_{\Sigma_\eps}\bigl(|\Curl u|^{2}+(\Div u)^{2}\bigr)
\ \ge\ \tfrac12\int_{\Sigma_\eps}|\nabla u|^{2}
-C\int_{\Sigma_\eps}|u|^{2}.
\end{equation}
This is the two-wall cancellation: each wall separately contributes a
curvature boundary term of size $O(1)\cdot\norm{u_t}_{L^2(\Gamma_\pm)}^{2}$,
which no interior quantity controls uniformly in $\eps$; it is only their
difference, an exact derivative across the shell, that collapses to the
absorbable interior integral \eqref{eq:app-ftc}.

\emph{Step 3 (rescaling and frame).} Under $r=\eps z$ and division by $\eps$
(energy per unit rescaled thickness), $\int_{\Sigma_\eps}|\nabla u|^{2}$
becomes, up to curvature commutators bounded by
$C\norm U(\norm{U}_{1,\eps}+\norm U)$, the squared norm
$\sum_i(\norm{\partial_s u_i}^{2}+\norm{\rho^{-1}\partial_\varphi u_i}^{2}
+\eps^{-2}\norm{\partial_z u_i}^{2})$: the flat gradient and the Fermi frame
derivatives differ by Christoffel terms that are $O(1)$ multiples of $|U|$ for
$\eps\le\eps_0$. Absorbing these by Young's inequality in
\eqref{eq:app-gaffney-flat} yields the stated inequality. Mode uniformity
follows as in Theorem~\ref{thm:korn}: all coefficients are
$\varphi$-independent and the identity \eqref{eq:app-friedrichs} restricts to
each azimuthal mode.
\end{proof}

\subsection{Proof of Mosco convergence (Theorem~\ref{thm:mosco})}
\label{app:mosco}

Recall the Kuwae--Shioya identification: $\Phi_\eps:\Hh_0\to\Hh_\eps$ extends a
tangential surface field $z$-independently; by the volume normalisation
$\int_{-1/2}^{1/2}\det(I-\eps zS)\,dz=1+\tfrac{\eps^{2}}{12}K$,
$\norm{\Phi_\eps v}_{\Hh_\eps}\to\norm{v}_{\Hh_0}$, so $\Hh_\eps\to\Hh_0$ in
the sense of \cite{KuwaeShioya}. Strong (resp.\ weak) convergence
$U_\eps\to u$ is understood in that framework. Both statements below are
proved simultaneously for the deformation pair
$(Q_\eps^{\mathrm{slip}},Q_0^{\mathrm{slip}})$ and the Hodge pair
$(Q_\eps^{\mathrm H},Q_0^{\mathrm H})$, the only difference being which
coercivity inequality and which sign of the fluctuation squares is used.

\begin{proof}[Proof of (M1)]
Let $U_\eps\rightharpoonup u$ weakly and assume, as we may,
$L:=\liminf_\eps Q_\eps(U_\eps)<\infty$, passing to a subsequence along which
$Q_\eps(U_\eps)\to L$ and $\sup_\eps(\,Q_\eps(U_\eps)+\norm{U_\eps}_{\Hh_\eps})
<\infty$. Theorem~\ref{thm:korn} (resp.\ Lemma~\ref{lem:gaffney}) applied on
the solenoidal subspace, where the wall trace condition holds, bounds
$\norm{U_\eps}_{1,\eps}$ uniformly. Hence:
(i) $\norm{\partial_z U_\eps}\le C\eps\to0$, so every $L^{2}$ weak limit is
$z$-independent; (ii) by \eqref{eq:app-u3},
$\norm{u_{3,\eps}}\le C\eps\to0$, so the limit is tangential, $u\in\Hh_0$;
(iii) testing $\Div_{G_\eps}U_\eps=0$ against smooth functions and passing to
the limit gives $\Divg u=0$.

For the lower bound, express the energy density through the exact components
\eqref{eq:app-strain} (resp.\ the exact curl/divergence components). Each
component is an $\eps$-uniformly bounded linear combination of the quantities
\[
X_\eps:=\bigl(\nabla' u_{t,\eps},\ \nabla' u_{3,\eps},\
\eps^{-1}\partial_z U_\eps,\ U_\eps\bigr),
\]
with coefficients converging in $C^{0}(\Sigma)$, as $\eps\to0$, to the frozen
($h_1=1$, $h_2=\rho$) coefficients. Along a further subsequence
$X_\eps\rightharpoonup X=(\nabla' u,\,0,\,\xi,\,u)$ weakly in $L^{2}$, where
$\xi=(\xi_1,\xi_2,\xi_3)$ is the weak limit of the scaled normal derivative;
$\nabla' u_{3,\eps}\rightharpoonup0$ is not needed and is not claimed, only
weak convergence of the listed entries, which the uniform bound supplies. The
density is a nonnegative quadratic form in $X_\eps$ with continuously
converging coefficients, so by convexity and weak lower semicontinuity,
\[
\liminf_\eps Q_\eps(U_\eps)\ \ge\
\int_\Sigma\rho\Bigl[\,\mathcal D(u)
+(\xi_1\pm\kappa_1u_1)^{2}+(\xi_2\pm\kappa_2u_2)^{2}+2_{(\mathrm{slip})}\xi_3^{2}
\Bigr]
\ \ge\ \int_M \mathcal D(u)\,\rho\,ds\,d\varphi\;=\;Q_0(u),
\]
where $\mathcal D(u)$ is the slow density $2|\Def_g u|^{2}$
(resp.\ $(\curlg u)^{2}+(\Divg u)^{2}$), the sign $\pm$ is the one of
Theorem~\ref{thm:ident}, and the second inequality discards the nonnegative
fluctuation squares and integrates out $z$. This is (M1).
\end{proof}

\begin{proof}[Proof of (M2)]
Let first $v$ be a smooth solenoidal surface field. Define the corrected
ansatz $\widehat U_\eps$ by the invariant two-wall profile of
Proposition~\ref{prop:profile} at the endpoint $\alpha$ of the given pair
($\alpha=0$ slip, $\alpha=1$ Hodge), in coordinate components
$U^{s}=v^{s}(1+2\alpha\eps z\kappa_1+(\eps z)^{2}\alpha(1+2\alpha)\kappa_1^{2})$
and likewise $U^{\varphi}$, together with the normal corrector of
\S\ref{sub:corrector},
\[
\widehat u_3=\eps^{2}\bigl(z^{2}-\tfrac14\bigr)\,q,\qquad
q=-\tfrac12(2\alpha-1)\,v_1(\kappa_1+\kappa_2)'
-\alpha\,(\kappa_1-\kappa_2)\,\partial_s v_1 .
\]
By the corrector computation (verified for all $\alpha$ and all azimuthal
modes, check C6), $f_\eps:=\Div_{G_\eps}\widehat U_\eps=O(\eps^{2})$ in
$C^{0}(\Sigma)$, with all tangential derivatives of the same order, and
$\widehat u_3$ vanishes on the walls, so
$\int_\Sigma f_\eps\,d\mu_\eps=0$ by the divergence theorem.

\emph{Exact solenoidality.} Split $f_\eps=\bar f_\eps(x)+\tilde f_\eps$, where
$\bar f_\eps(x)=\int_{-1/2}^{1/2}f_\eps\,\theta_\eps\,dz$ is the $z$-average
against the normalised weight
$\theta_\eps=\det(I-\eps zS)/\int\det(I-\eps zS)\,dz$. The oscillating part
$\tilde f_\eps$ has vanishing weighted $z$-mean at every $x$ and is removed by
a normal correction: solve
$\partial_z w_3+\eps z\,(\text{curvature terms})\,w_3=-\eps\,\tilde f_\eps$
with $w_3(\cdot,-\tfrac12)=0$; the mean-zero condition gives
$w_3(\cdot,+\tfrac12)=0$, and $w_3=O(\eps^{3})$ with derivatives $O(\eps^{3})$
tangentially and $O(\eps^{2})$ after the $\eps^{-1}\partial_z$ scaling. The
averaged part satisfies $\int_M\bar f_\eps\,dV=0$; let $w_t=\mathcal B(\bar
f_\eps)$ with $\mathcal B$ a fixed bounded right inverse of $\Divg$ from
mean-zero $H^{k}(M)$ to $H^{k+1}(TM)$ (on the closed surface $M$, e.g.\
$\mathcal B=\gradg\Delta_g^{-1}$), extended $z$-independently; then
$w_t=O(\eps^{2})$ in every $H^{k}$. The field
$\widetilde U_\eps:=\widehat U_\eps+w_3e_3+w_t$ satisfies
$\Div_{G_\eps}\widetilde U_\eps=\eps\,w_t\!\cdot\!\gradg\log\det(I-\eps zS)
=O(\eps^{3})$; one further iteration of the same two-step correction, which
converges geometrically since each round gains a factor $C\eps$, produces an
exactly solenoidal $\widetilde U_\eps$ with
$\norm{\widetilde U_\eps-\widehat U_\eps}_{1,\eps}=O(\eps^{2})$.

\emph{Energy expansion.} In the exact densities, the profile closes the
squares of Theorem~\ref{thm:ident} by construction: for the deformation pair
the physical profile gives $\eps^{-1}\partial_zu_t+S u_t/h=O(\eps)$, and for
the Hodge pair $\eps^{-1}\partial_zu_t-S u_t/h=O(\eps)$, so the fluctuation
squares contribute $O(\eps^{2})$; the corrector and the solenoidal corrections
enter the strain at $O(\eps)$ and the energy at $O(\eps^{2})$; the slow
density converges with rate $O(\eps^{2})$ by the symmetric-shell cancellation
of the linear-in-$z$ terms under $\int_{-1/2}^{1/2}\cdot\,dz$. Hence
$Q_\eps(\widetilde U_\eps)=Q_0(v)+O(\eps^{2})$, with the constant controlled
by $\norm{v}_{H^{3}(M)}$, and $\widetilde U_\eps\to v$ strongly. This proves
(M2) on smooth solenoidal fields at the sharp rate.

\emph{From the core to the form domain.} Smooth solenoidal fields are a core
of $Q_0$: on the closed surface the Leray projection commutes with the
spectral truncations of the limit operator, whose eigenfields are smooth. For
$v$ in the form domain choose smooth solenoidal $v_k\to v$ in form norm, let
$\widetilde U_\eps^{(k)}$ be the recovery of $v_k$, and extract a diagonal
sequence $\widetilde U_\eps^{(k(\eps))}$ by the standard metrisability
argument for Mosco convergence in the Kuwae--Shioya framework
\cite{KuwaeShioya}; equi-coercivity of the forms, supplied again by
Theorem~\ref{thm:korn} and Lemma~\ref{lem:gaffney}, makes the diagonal
sequence admissible. This completes (M2), and with (M1) the proof of
Theorem~\ref{thm:mosco}.
\end{proof}

\section{Symbolic and numerical verification}\label{app:verification}
The algebraic identities of Sections~\ref{sec:formal} and~\ref{sec:rigorous} are
certified by exact symbolic computation: the Gauss and scaling identities, the
limit densities of Theorem~\ref{thm:ident} for both wall conditions, the operator
identity of Theorem~\ref{thm:dichotomy}, and the divergence-defect cancellation of
the per-condition corrector of \S\ref{sub:corrector} at both endpoints (checks
C0--C6). The uniform Korn constant of Theorem~\ref{thm:korn} is independently
corroborated numerically by a finite-element computation on a torus, giving a
lower bound bounded away from zero uniformly in the thickness and the azimuthal
mode. The C0--C6 symbolic scripts and the finite-element Korn code, together with an
independently written cross-verification suite, are archived on Zenodo at
\href{https://doi.org/10.5281/zenodo.21442018}{doi:10.5281/zenodo.21442018}.

\end{document}